\tiny\color{gray},
\newtheorem{theorem}{Theorem}
\newtheorem{lemma}[theorem]{Lemma}
\newacronym{jw}{JW}{Jordan Wigner}
\newacronym{dft}{DFT}{density functional theory}
\newacronym{dfpt}{DFPT}{density functional perturbation theory}
\newacronym{dmc}{DMC}{diagrammatic Monte Carlo}
\newacronym{qc}{QC}{quantum computer}
\newacronym{epc}{EPC}{electron-phonon coupling}
\newacronym{sb}{SB}{standard binary}
\newacronym{bec}{BEC}{Bose-einstein condensate}
\newacronym{nisq}{NISQ}{noisy intermediate-scale quantum}
\newacronym{pqs}{PQS}{perturbative quantum simulation}
\newacronym{ed}{ED}{exact diagonalisation}
\newacronym{tds}{TDS}{time dynamics simulation}
\newacronym{qvm}{QVM}{quantum virtual machine}
\newacronym{pv}{PV}{photovoltaic}
\newacronym{hs}{HS}{Hubbard-Stratonovich}
\newacronym{iqp}{IQP}{instantaneous quantum polynomial}
\newacronym{mc}{MC}{Monte Carlo}
\newacronym{otoc}{OTOC}{out-of-time-order correlators}
\newacronym{mcmc}{MCMC}{Markov chain Monte Carlo}
\newacronym{qft}{QFT}{Quantum fourier transform}
\newacronym{ptb}{PTB}{Partial Trace over Bosons}
\newacronym{lcu}{LCU}{Linear Combination of Unitaries}
\newacronym{qsp}{QSP}{Quantum Signal Processing}
\newacronym{qsvt}{QSVT}{Quantum Singular Value Transform}
\newacronym{dd}{DD}{dynamical decoupling}
\newacronym{qmc}{QMC}{quantum Monte Carlo}
\newacronym{scp}{SCP}{self-consistent phonon}
\newacronym{qbm}{QBM}{quantum Brownian motion}
\newacronym{bmme}{BMME}{Born-Markov master equation}
\newacronym{flo}{FLO}{Fermionic Linear Optics}
\newacronym{ldos}{LDOS}{Local Density Of States}
\newacronym{fft}{FFT}{Fast Fourier Transform}
\newacronym{oqs}{OQS}{open quantum systems}
\begin{document}
\title{Quantum algorithm for dephasing of coupled systems: decoupling and IQP duality}

\author{Sabrina Yue Wang}
\email{sabrina@phasecraft.io}
\affiliation{Phasecraft Ltd.}
\author{Raul A. Santos}
\email{raul@phasecraft.io}
\affiliation{Phasecraft Ltd.}
\date{\today}

\begin{abstract}
Noise and decoherence are ubiquitous in the dynamics of quantum systems coupled to an external environment.
In the regime where environmental correlations decay rapidly, the evolution of a subsytem is well described by a Lindblad quantum master equation.

In this work, we introduce a quantum algorithm for simulating unital Lindbladian dynamics by sampling unitary quantum channels without extra ancillas. Using ancillary qubits we show that this algorithm allows approximating general Lindbladians as well. For interacting dephasing Lindbladians coupling two subsystems, we develop a decoupling scheme that reduces the circuit complexity of the simulation.
This is achieved by sampling from a time-correlated probability distribution—determined by the evolution of one subsystem, which specifies the stochastic circuit implemented on 
the complementary subsystem.

We demonstrate our approach by studying a model of bosons coupled to fermions via dephasing, which naturally arises from anharmonic effects in an electron–phonon system coupled to a bath.
Our method enables tracing out the bosonic degrees of freedom, reducing part of the dynamics to sampling an \gls{iqp} circuit.
The sampled bitstrings then define a corresponding fermionic problem, which in the non-interacting case can be solved efficiently classically. We comment on the computational complexity of this class of dissipative problems, using the known fact that sampling from \gls{iqp} circuits is believed to be difficult classically.
\end{abstract}

\maketitle

\section{Introduction}
 
 Historically, much effort has been devoted to develop efficient quantum algorithms for unitary evolution, culminating in optimal approaches \cite{Berry_2006,Childs2010,Berry_2014,Berry_2015,Low2017,Low_2019,Childs2021} as a function of the evolution time. 
This should come as no surprise as a natural application of quantum computers is to study the time evolution of quantum systems \cite{deutsch1985quantum,Somaroo_1999}, providing a clear path towards quantum advantage as the system size increases \cite{miessen2023quantum, PC_Quant_2025, PC_Google_2025, Quant_super_2025}.

On the other hand, the study of the equally relevant time evolution of open systems, i.e. systems that evolve in the presence of external (an often uncontrolled) degrees of freedom, has received less attention \cite{Li_2023}. In the limit of fast relaxation of the external system (often denoted as {\it the bath}) the time evolution of a subsystem is captured by a Lindblad master equation \cite{Gorini1975,Lindblad1976, Breuer2007} that determines the dynamics of the system and can in principle realise universal quantum computation \cite{Verstraete2009}. 
In this {\it Markovian} regime several quantum algorithms have been proposed to simulate open quantum dynamics \cite{Kliesch_2011,Childs:2016izk,cleve2019,Schlimgen2021,Schlimgen2022,Joo_2023,li2023,Suri2023twounitary,watad2023,Di_Bartolomeo_2024,Borras2025,Ding2024,pocrnic2025} with different requirements and assumptions. 

As quantum computers have yet to reach maturity, simpler albeit not optimal quantum algorithms usually perform better in practice \cite{Childs2018,Childs_2019,Bosse_2025} as their prefactors are usually smaller than the ones of theoretical optimal proposals. Mirroring this, it is natural to look for algorithms that lower the quantum resources of simulating  open quantum dynamics in concrete scenarios. Motivated by this quest, here we introduce and analyse a quantum simulation algorithm that approximates the time evolution of an open quantum system generated by a {\it unital} Lindbladian
\begin{align}\label{eq:unital}
    \mathcal{L}(\rho)=-i[H,\rho]+\sum_{i=1}^N \left( L_i \rho L_i - \frac{1}{2}\{L_i^2,\rho\}\right),
\end{align}
where $L_i$ is an hermitian but otherwise arbitrary operator. In general, a sufficient condition for a unital Lindbladian $\mathcal{L}$ is that $\sum_i [L_i, L^{\dagger}_i] =0$. We consider in this work the subset where $L_i$ are hermitian. A unital Lindbladian $\mathcal{L}$ has the identity operator as a fixed point i.e $\mathcal{L}(\mathbb{I})=0$ and generates the {\it unital quantum channel} $\mathcal{E}_t:=e^{t\mathcal {L}}$ that preserves the identity operator as $\mathcal{E}(\mathbb{I})=\mathbb{I}$.  We achieve this by approximating this unital quantum channel by a {\it mixed unitary channel} $\mathcal{E}(\rho):=\sum_kp_kU_k\rho U_k^\dagger$, where $0\leq p_k\leq 1$ and $\sum_kp_k=1$ i.e $\mathcal{E}$ is a convex combination of unitary channels \cite{Audenaert_2008}.  Due to their structure, mixed unitary channels can be interpreted as the expected channel obtained by applying a stochastic unitary channel sampled from a probability distribution $p_k$. This generates an efficient way of approximating them.

Unital Lindbladians cannot decrease the entropy, and their dissipative part tends to destroy quantum coherence. As the identity operator is a fixed point of the evolution, one may ask about the classical complexity of simulating them. After all, if the quantum coherences are being erased and the system is driven to the infinite temperature state, is a quantum computer even needed to simulate these systems? While this intuition is correct if the Lindbladian evolution has a unique steady state and the evolution time is sufficiently long, the dynamics is actually nontrivial in the presence of symmetries, or if the Lindbladian has other fixed points. For example, in the physics of fermions hopping on a lattice and subject to dephasing \cite{Medvedyeva_2016} it has been shown that the Lindbladian, although unital, is gapless in the thermodynamic limit, leading to nontrivial long time behaviour. Moreover, several open problems about the interplay of dephasing and coherent dynamics exist, ranging from its role in the many body localisation transition \cite{Longhi_2023,Žnidarič_2010,Fischer_2016,Medvedyeva2016_MBL,Levi_2016,Marcantoni_2022}, its effect in quantum transport \cite{Longhi_2023,Rebentrost_2009,Xhek_2021,Longhi_20241}, the breaking of topological order \cite{kiely2025phasetransitiontopologicalindex}, to the effect of dephasing in the quantum-classical transition \cite{Lonigro2022}.

Here  we argue for the general computational hardness of this approach from a different perspective. We study bipartite systems where the jump operator $L_i=A_iB_i$ acts nontrivially in subsystems $\mathcal{S}_A$ and $\mathcal{S}_B$. For this scenario we develop a decoupling scheme that allows one to trace out one of the subsystems. This result is highlighted in \cref{theo:Dcube}, and used to study the problem of electron-phonon coupling due to anharmonicity in \cref{sec:elec_phonon}. There we show explicitly that after tracing out the phonon degrees of freedom, the evolution in the fermionic sector is determined by sampling an IQP circuit, something that is expected to be difficult classically \cite{Bremner_2010,Bremner_2016}.
The decoupling scheme presented in \cref{lem:decoup} also allows the efficient systematic study of {\it non-Markovian} dissipation on a quantum computer, something that to our knowledge has not been discussed extensively.

\subsection{Relation to previous work}
Several algorithms for the quantum simulation of Lindbladian dynamics has been developed. In \cite{Kliesch_2011} the first algorithm for simulating Lindbladians by Trotterisation was discussed, leading to a $O(t^2/\epsilon)$ scaling of  gate complexity with evolution time $t$ and error $\epsilon$ (omitting factors related with the system size). In \cite{Childs:2016izk} the authors improve on this to $O(t^{3/2}/\sqrt{\epsilon})$. Assuming sparse quantum jump operators represented by an oracle, they also provide an algorithm with query complexity $O((t^2/\epsilon){\rm polylog}(t/\epsilon))$. This result was improved in \cite{cleve2019} where a performance of $O(t\, {\rm polylog}(t/\epsilon))$ is achieved using linear combination of unitaries.  A quantum algorithm achieving this same scaling but without compressed encoding has been discussed in \cite{li2023}. In \cite{Schlimgen2021,Schlimgen2022} an unravelling approach to the simulation of Lindbladian dynamics is presented, where non-unitary operators are approximated by sums of unitaries, although without a complexity analysis. 
The same technique of splitting a generic operator into two unitaries but using the quantum singular value transform is discussed in \cite{Suri2023twounitary}.
Refs. \cite{Ding2024} and \cite{pocrnic2025} provide different algorithms for embedding the non-unitary dynamics into a Hamiltonian evolution, the former using a numerical scheme to map the unravelled equation into a dilated Hamiltonian problem while the latter uses repeated interactions and measurement.
An approach to estimate the matrix elements needed to compute the Lindblad equation is presented in \cite{Joo_2023}.
On the classical side, novel algorithms based on unravelling the evolution equation \cite{Werner_2016} and sampling trajectories in a tensor network framework has been introduced in \cite{Classical_2025}. Moreover, dynamics under the unital Lindbladians we consider in this work can alternatively be framed as gradient flow~\cite{fcjk-qbwh}.

Motivated by the constraints of current quantum capabilities, cheaper practical algorithms for simulation have also been explored. In \cite{watad2023} a variational quantum algorithm has been proposed. It uses a parameterised circuit to time evolve a vectorised representation of the density matrix. In \cite{Di_Bartolomeo_2024} the authors show that any Lindblad dynamics can be simulated using one bath qubit, with repeated resets. In the same context of simplifying quantum algorithms, \cite{Borras2025} combines second-order Trotter formulas with randomisation to approximate Lindbladian evolution via sampling from a distribution over unitaries.

\subsubsection*{Our contribution}

In this work we explore sampling unitaries that implement unital channels, much in the same spirit as in \cite{Borras2025}, where no ancilla qubits are needed, but where the type of non-unitary evolutions are constrained to mixed unitary channels. We depart from \cite{Borras2025} in two respects:
\begin{itemize}
    \item We concentrate on the generator \cref{eq:unital} and provide an explicit mixed-unitary channel that approximates it. The gate complexity for an approximation of the evolution up to time $t$ and precision $\epsilon$ scales as to $O(t^2/\epsilon)$
    \item We introduce the decoupling \cref{lem:decoup} that allows the study of coupled systems (as in \cref{theo:Dcube}) by reducing the evolution of the coupled system to two different Hamiltonian simulation problems where the bitstring output of one determines the circuit in the other and the expectation of this process generates the full dynamics.
\end{itemize}

We exemplify this algorithm by studying a model of electrons coupled to phonons. Tracing the bosonic degrees of freedom corresponding to the phonons, we find that the evolution in the fermionic subsystem is determined by a mixed unitary channel with a {\it time dependent} probability distribution that corresponds to the measurement outcomes of an IQP circuit. 
This non-Markovian evolution of the fermions is further investigated by sampling (in emulation) the IQP circuit over 10 qubits, allowing us to fully simulate the system classically over 10 trotter steps.

The paper is organised as follows, in \cref{sec:main_res} we motivate and present the main results. In \cref{sec:applic} we discuss some applications of the main results in concrete settings. In \cref{sec:disorder} we highlight the connection with disordered systems while in \cref{sec:elec_phonon} we use these results to study the effect of electron-phonon coupling on a physical system. By exactly tracing out the bosonic degrees of freedom, we obtain an effective description of the system in terms of a mixed unitary channel on the fermionic sector, with probabilities associated with an IQP circuit with a number of qubits proportional to the number of Trotter steps. We study the properties of the emergent IQP circuit and discuss the full quantum algorithm in this context.
We show the result of the time evolution of the electron-phonon system in \cref{sec:numerics}, where we observe the effect of the dissipation on the fermionic system induced by the coupling through dephasing with the bosonic bath. Finally
in \cref{sec:conclusion} we discuss some broader implications of our work in terms of the hardness of classical simulation of unital Lindbladian evolution.

\section{Main results}
\label{sec:main_res}
\subsection{Motivation}
Let's consider the unital Lindbladian with Hamiltonian $H$ and quantum jump operator $L=L^\dagger$ 
\begin{align}\label{eq:lindbladian}
    \mathcal{L}(\rho)=-i[H,\rho] + L \rho L - \frac{1}{2}\{L^2,\rho\},
\end{align}
where the first term captures the coherent evolution of the system while the second describes dephasing induced by the environment. The dissipative evolution $e^{t\mathcal{L}}$ generated by \cref{eq:lindbladian} can be approximated by 
\begin{align}\label{eq:map}
    \mathcal{E}(\rho):=e^{-iHt}\frac{1}{2}(e^{i{\sqrt{t}}L}\rho e^{-i{\sqrt{t}}L}+e^{-i{\sqrt{t}}L}\rho e^{i\sqrt{t}{L}})e^{iHt}=e^{t\mathcal{L}}(\rho)+O(t^{2}),
\end{align}
as all the odd powers in $\sqrt{t}$ cancel. This expression can be interpreted as the expectation of an stochastic channel $\mathcal{U}_{s}(\rho):= e^{-iHt}e^{si\sqrt{t}L}\rho e^{-si\sqrt{t}L}e^{iHt}$, where the random variable $s=\{+,-\}$ is sampled uniformly. From this result the following theorem follows
\begin{theorem}[Dissipative Dynamics - simplified]\label{thm:1}
The time evolution with the unital Lindblad operator 
\begin{align}
    \mathcal{L}(\rho)=-i[H,\rho]+\sum_{i=1}^{N_L} \left( L_i \rho L_i - \frac{1}{2}\{L_i^2,\rho\}\right),
\end{align}
where $N_L$ is the number of jump operators in the dissipative component and $L_i$ is hermitian, can be approximated by the expectation value of the stochastic channel
\begin{align}\label{eq:mixed_unitary}
   \mathcal{U}_{\bm s}(\rho):= e^{-iHt} \left(\prod_{j=1}^{N_L} e^{s_ji\sqrt{t}L_j}\rho e^{-s_ji\sqrt{t}L_j}\right)e^{iHt},
\end{align}
where each $s_j=\pm 1$ in ${\bm s}=(s_1,\dots s_{N_L})$ is a uniformly distributed random variable with probability distribution $p_{\bm s}=\frac{1}{2^{N_L}}$. The error incurred in this approximation is 
\begin{align}
\left\|\mathbb{E}(\mathcal{U}_{\bm s})-e^{t\mathcal{L}}\right\|_\diamond=\left\|\sum_{{\bm s}=\{+,-\}^{N_L}}p_{\bm s}\mathcal{U}_{\bm s}-e^{t\mathcal{L}}\right\|_\diamond= O(t^2),
\end{align}
for $t\in[0,1]$.
\end{theorem}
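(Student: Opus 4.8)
The plan is to Taylor-expand the stochastic channel in powers of $\sqrt t$, use the average over $\bm s$ to kill every half-integer power, and then match what survives against $e^{t\mathcal L}=\mathrm{id}+t\mathcal L+O(t^2)$ order by order. The single–jump‑operator identity of \cref{eq:map} is the heart of the argument; the only real work is promoting the pointwise expansions to a uniform diamond-norm statement.

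First I would introduce the conjugation superoperators $\mathcal C_t(\rho):=e^{-iHt}\rho\,e^{iHt}$ and $\mathcal V^{s}_j(\rho):=e^{s i\sqrt t L_j}\rho\,e^{-s i\sqrt t L_j}$, so that the channel in \cref{eq:mixed_unitary} is exactly $\mathcal U_{\bm s}=\mathcal C_t\circ\mathcal V^{s_1}_1\circ\cdots\circ\mathcal V^{s_{N_L}}_{N_L}$. Since composition of superoperators distributes over sums and the $s_j$ are independent and uniform, the expectation factorises,
\begin{equation}
\mathbb E_{\bm s}\!\left[\mathcal U_{\bm s}\right]=\mathcal C_t\circ\bar{\mathcal V}_1\circ\cdots\circ\bar{\mathcal V}_{N_L},\qquad \bar{\mathcal V}_j:=\tfrac12\big(\mathcal V^{+}_j+\mathcal V^{-}_j\big),
\end{equation}
which reduces the problem to the single-operator case. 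Expanding $\mathcal V^{\pm}_j(\rho)=\sum_{n\ge 0}\tfrac{(\pm i\sqrt t)^n}{n!}(\mathrm{ad}_{L_j})^n\rho$ and averaging annihilates the odd $n$, leaving $\bar{\mathcal V}_j=\sum_{m\ge 0}\tfrac{(-t)^m}{(2m)!}(\mathrm{ad}_{L_j})^{2m}=\mathrm{id}+t\,\mathcal D_j+O(t^2)$, where $\mathcal D_j(\rho)=L_j\rho L_j-\tfrac12\{L_j^2,\rho\}$ comes from the identity $-\tfrac12(\mathrm{ad}_{L_j})^2\rho=L_j\rho L_j-\tfrac12\{L_j^2,\rho\}$; likewise $\mathcal C_t=\mathrm{id}+t\,\mathcal H+O(t^2)$ with $\mathcal H(\rho)=-i[H,\rho]$.

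Next I would compose the $N_L+1$ factors. Each has the form $\mathrm{id}+t\cdot(\text{generator})+O(t^2)$, so their composition equals $\mathrm{id}+t\big(\mathcal H+\sum_j\mathcal D_j\big)+O(t^2)=\mathrm{id}+t\mathcal L+O(t^2)$, since every cross term involves at least two generators or one remainder and is therefore at least second order. Comparing with $e^{t\mathcal L}=\mathrm{id}+t\mathcal L+O(t^2)$ gives $\mathbb E_{\bm s}[\mathcal U_{\bm s}]-e^{t\mathcal L}=O(t^2)$ pointwise on each $\rho$.

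The remaining step — and the main obstacle — is to turn this into a uniform $\|\cdot\|_\diamond$ bound on $t\in[0,1]$. I would do it through the elementary estimates $\|\mathrm{ad}_{L_j}\|_\diamond\le 2\|L_j\|$, $\|\mathrm{ad}_H\|_\diamond\le 2\|H\|$ together with submultiplicativity of $\|\cdot\|_\diamond$ under composition, which yield for instance $\big\|\bar{\mathcal V}_j-\mathrm{id}-t\mathcal D_j\big\|_\diamond\le t^2\sum_{m\ge 2}\tfrac{(2\|L_j\|)^{2m}}{(2m)!}\le t^2\cosh(2\|L_j\|)$ for $t\le 1$, analogously for $\mathcal C_t$, and $\big\|e^{t\mathcal L}-\mathrm{id}-t\mathcal L\big\|_\diamond\le t^2 e^{\|\mathcal L\|_\diamond}$; multiplying out the $N_L+1$ uniformly bounded factors and subtracting then gives $\big\|\mathbb E_{\bm s}[\mathcal U_{\bm s}]-e^{t\mathcal L}\big\|_\diamond\le C\,t^2$ with an explicit $C$ depending on $N_L$, $\max_j\|L_j\|$ and $\|H\|$. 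The genuine subtlety hidden in $C$ is this dependence on operator norms: for spin systems, or a finite bosonic truncation, all these norms are finite and the bound is clean, whereas for the unbounded bosonic jump operators of \cref{sec:elec_phonon} the constant is cutoff-dependent, so the statement is to be read with an implicit occupation/energy cutoff (equivalently, on the physically relevant invariant subspace).
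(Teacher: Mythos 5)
Your proposal is correct, and it rests on the same mechanism as the paper's proof: the uniform average over the signs $s_j$ removes all odd powers of $\sqrt t$, and the identity $-\tfrac12\,\mathrm{ad}_{L_j}^2(\rho)=L_j\rho L_j-\tfrac12\{L_j^2,\rho\}$ identifies the surviving first-order term with the dissipator. Where you genuinely differ is in the bookkeeping for several jump operators and the Hamiltonian. You factorise the expectation into $\mathcal C_t\circ\bar{\mathcal V}_1\circ\cdots\circ\bar{\mathcal V}_{N_L}$, expand every factor to first order in $t$, and control the composition and the exponential $e^{t\mathcal L}$ with explicit operator-norm estimates such as $\|\mathrm{ad}_{L_j}\|_\diamond\le 2\|L_j\|$, obtaining a fully explicit (if crude) constant. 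The paper instead sets $H=0$ (absorbing the coherent part into the usual first-order Trotter error), compares the product of averaged single-jump channels with $e^{t\mathcal L}$ through a telescoping sum, uses submultiplicativity together with the fact that CPTP maps have diamond norm one to reduce everything to single-jump comparisons, and then exploits that even powers of $\mathrm{ad}_{L_j}$ are proportional to powers of the dissipator $\ell_j$, so each single-jump error resums into a series in $\ell_j$ starting at order $t^2$; see \cref{sec:proofs}. The two routes buy slightly different things: yours is a self-contained order-by-order argument with explicit constants and no appeal to Trotter bounds; the paper's is tighter term by term and cleanly separates the Hamiltonian via a standard Trotter estimate. Note that both bounds ultimately depend on the norms of the jump operators (the paper's through $\|\ell_j^m\|_\diamond$), so your closing caveat about unbounded bosonic jump operators and an implicit cutoff applies equally to the paper's argument.
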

We prove this theorem in \cref{sec:proofs}. A implicit use of this connection has been used in the context of designing algorithms to approximate Gibbs states~\cite{ding2025end}. However we haven't found an explicit discussion in the literature. We make this connection explicit as it helps us build the subsequent theorems and lemmas.
Unital Lindbladians are defined as convex combinations of unitary channels so this connection is not extremely surprising. In the case of interacting unital Lindbladians, where the jump operator acts nontrivially in two subsystems, we can extend this result as follows.
Consider now a bipartite system $\mathcal{S}=\mathcal{S}_A\cup\mathcal{S}_B$ and the dephasing Lindbladian
\begin{align}
    \mathcal{L}(\rho)=AB\rho AB - \frac{1}{2}\{(AB)^2,\rho\},
\end{align}
where $A\in \mathcal{S}_A$ and $B\in\mathcal{S}_B$ are hermitian operators satisfying $[A,B]=0$ and $A^2=1$. For simplicity we do not consider any coherent evolution for now as it can be easily added back. This Lindbladian generates the evolution $e^{t\mathcal{L}}(\rho)$ and can as before, be approximated by the convex combination of unitary channels
\begin{align}
    \mathcal{E}(\rho):=\frac{1}{2}(e^{i{\sqrt{t}}BA}\rho e^{-i{\sqrt{t}}BA}+e^{-i{\sqrt{t}}BA}\rho e^{i\sqrt{t}{BA}})=e^{t\mathcal{L}}(\rho)+O(t^{2})
\end{align}
Using that $A^2=1$ we can decompose the channel $\mathcal{E}$ into the sum
\begin{align}
    \mathcal{E}(\rho)=\sum_{\gamma=0,1}\frac{A^{\gamma}}{2}\sum_{s^{+},s^{-}=\pm}e^{i\sqrt{t}Bs^{+}}\rho e^{-i\sqrt{t}Bs^{-}}(s^{+}s^{-})^{\gamma}\frac{A^{\gamma}}{2}.
\end{align}
This form motivates defining the state $|+\rangle=\frac{1}{\sqrt{2}}(|1\rangle+|-1\rangle)$ which is the $+1$ eigenstate of the Pauli $X$ operator, such that
\begin{align}
    \mathcal{E}(\rho)= \sum_{\gamma} A^{\gamma}\langle\gamma|e^{i\sqrt{t}BZ}|+\rangle \rho\langle +|e^{-i\sqrt{t}BZ}|\gamma\rangle A^{\gamma},
\end{align}
where $|\gamma\rangle:=Z^\gamma|+\rangle$ is the eigenstate of $X$ with eigenvalue $(-1)^\gamma$. This decomposition is very suggestive as it completely decouples the product of operators $A$ and $B$ at the cost of coupling $B$ with an extra spin (qubit) degree of freedom (see \cref{fig:decoupling}). 

\begin{figure}
    \centering
    \includegraphics[width=0.8\linewidth]{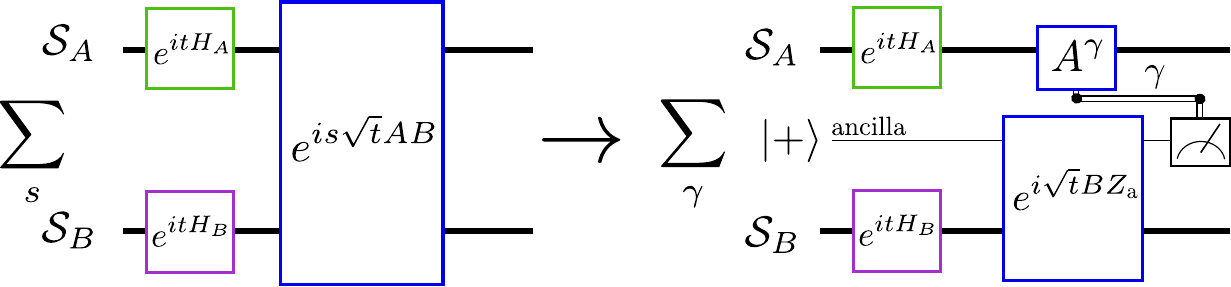}
    \caption{Decoupling gadget: The average of a channel with interaction term $e^{is\sqrt{t}AB}$ where $s=\{-1,1\}$ can be expressed as the outcome of a different channel where one subsystem is coupled to an ancilla through the Pauli Z operator and then the ancilla is measured. The result of that measurement defines the circuit that is implemented in the other subsystem. This gadget is technically a feedforward mechanism in general, but it does not need to maintain coherence between the subsystems $S_A$ and $S_B$ as they can be \textit{completely} separated. Therefore, in practice, the full circuit can be exactly constructed via disconnected, smaller circuits through classical postprocessing.}
    \label{fig:decoupling}
\end{figure}

Taking the partial trace with respect to the subsystem $\mathcal{S}_B$ on a separable density matrix $\rho = \rho_{\mathcal{S}_A}\otimes \rho_{\mathcal{S}_B}$ we have
\begin{align}\label{eq:split}
    {\rm Tr}_{\mathcal{S}_B}(\mathcal{E}(\rho_{\mathcal{S}_A}\otimes \rho_{\mathcal{S}_B}))&=\sum_\gamma p_\gamma A^\gamma \rho_{\mathcal{S}_A}A^\gamma,
\end{align}
where $p_\gamma$ corresponds to the probability of measuring the state $|\gamma\rangle$ on the ancilla after evolving the initial decoupled density matrix between the ancilla and the system, i.e.
\begin{align}
    p_\gamma&:={\rm Tr}(e^{i\sqrt{t}BZ}\rho_{aux}\otimes \rho_{\mathcal{S}_B} e^{-i\sqrt{t}BZ}|\gamma\rangle\langle \gamma |)
\end{align}
where $\rho_{aux}=|+\rangle\langle+|$. What have we gained with this? From \cref{eq:split}, we see that to simulate the original dissipative evolution, we can first simulate the system $\mathcal{S}_B$ coupled to the ancillary qubit and based on the outcome of measuring the ancillary qubit in the $\gamma$ basis perform a unitary evolution in $\mathcal{S}_A$. This decoupling mechanism is always beneficial when the coupling appears just in the dissipative term compared with the direct simulation of the entire system as the ancillary system does not have its own dynamics. Adding back the the unitary evolution and several types of couplings we find the result
\begin{theorem}[Dissipative Decoupled Dynamics: Dcube]\label{theo:Dcube}
Given a system $\mathcal{S}=\mathcal{S}_A\cup\mathcal{S}_B$, a Hamiltonian $H=H_A+H_B$ where $H_{A/B}$ acts only in its respective subspace, and hermitian jump operators $L_i=A_iB_i$ with $[A_i,B_j]=0$ and $A_j^2=1$ $\forall j$, where $A_i (B_i)$ acts nontrivially on $S_A(\mathcal{S}_B)$, the time evolution generated by the unital Lindbladian 
\begin{align}
    \mathcal{L}(\rho)=-i[H,\rho]+\sum_{i=1}^{N_L} \left( L_i \rho L_i - \frac{1}{2}\{L_i^2,\rho\}\right),
\end{align}
over a time $t$ can be approximated by introducing an ancillary qubit system $\mathcal{S}_C$ such that
\begin{align}
  e^{t\mathcal{L}}(\rho) =  \sum_{\bm\gamma=\{0,1\}^{N_L}}U_{\bm\gamma}^{(A)}{\rm Tr}_C(V^{(B,C)}\rho\otimes\rho_C V^{\dagger(B,C)}\Pi_{\bm\gamma})U_{\bm\gamma}^{\dagger(A)} +O(t^2)
\end{align}
with $U_{\bm\gamma}^{(A)}:=e^{-itH_A}\prod_{j=1}^{N_L}A_j^{\gamma_j}$, $V^{(B,C)}:=e^{-itH_B}\prod_{j=1}^{N_L}e^{i\sqrt{t}B_jZ_j}$, $\rho_C=\otimes_{i=1}^{N_L}|+\rangle\langle +|_i$ and $\Pi_{\bm\gamma}$ is a projector onto the state $|\bm\gamma\rangle :=|\gamma_1,\dots,\gamma_{N_L}\rangle$ where $|\gamma\rangle$ is the eigenstate of the Pauli X matrix with eigenvalue $(-1)^\gamma$.
\end{theorem}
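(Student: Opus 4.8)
The plan is to bootstrap from \cref{thm:1}, which already supplies the mixed-unitary approximation $e^{t\mathcal L}(\rho)=\mathbb E_{\bm s}[\mathcal U_{\bm s}(\rho)]+O(t^2)$ with $\mathcal U_{\bm s}(\rho)=W_{\bm s}\rho W_{\bm s}^\dagger$, $W_{\bm s}=e^{-iHt}\prod_{j=1}^{N_L} e^{is_j\sqrt t L_j}$, and $\bm s$ uniform on $\{\pm1\}^{N_L}$. Since $H_A$ and $H_B$ act on complementary factors they commute, so $e^{-iHt}=e^{-iH_A t}e^{-iH_B t}$; moreover $H_A$ commutes with every $B_j$ and $H_B$ with every $A_j$ because $[A_i,B_j]=0$. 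The whole computation then reduces to rewriting $\prod_j e^{is_j\sqrt t L_j}$ in a form in which the $\mathcal S_A$ content is cleanly separated from the $\mathcal S_B$ content and recorded in ancilla bits, exactly as in the single-jump discussion leading to \cref{eq:split}.

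The key algebraic input is the single-factor decomposition, which is the decoupling gadget of \cref{fig:decoupling} applied one jump operator at a time. Using $A_j^2=\mathbb I$ and $[A_j,B_j]=0$ one gets $e^{is_j\sqrt t A_j B_j}=\cos(\sqrt t B_j)+is_j A_j\sin(\sqrt t B_j)$, and this equals $\sum_{\gamma_j=0,1}A_j^{\gamma_j}s_j^{\gamma_j}\,\langle\gamma_j|e^{i\sqrt t B_j Z_j}|+\rangle$, the ancilla-$j$ matrix element being $\cos(\sqrt t B_j)$ for $\gamma_j=0$ and $i\sin(\sqrt t B_j)$ for $\gamma_j=1$. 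Expanding the product over $j$ and using that every $A_i$ commutes with every $B_j$ — so each $A_j$ moves left past all the $\cos/\sin(\sqrt t B_k)$ factors without ever permuting the $A$'s among themselves — yields $e^{-iH_A t}e^{-iH_B t}\prod_j e^{is_j\sqrt t L_j}=\sum_{\bm\gamma}U_{\bm\gamma}^{(A)}\bigl(\prod_j s_j^{\gamma_j}\bigr)\langle\bm\gamma|V^{(B,C)}|\bm+\rangle$, with $U_{\bm\gamma}^{(A)}$ and $V^{(B,C)}$ as in the statement, $|\bm+\rangle=\bigotimes_j|+\rangle_j$, and $\langle\bm\gamma|V^{(B,C)}|\bm+\rangle$ an operator on $\mathcal S_B$.

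Substituting this into $\mathcal U_{\bm s}(\rho)=W_{\bm s}\rho W_{\bm s}^\dagger$ produces a double sum over $\bm\gamma,\bm\gamma'$ with prefactor $\prod_j s_j^{\gamma_j+\gamma'_j}$. Taking the expectation over $\bm s$, independence together with $\mathbb E[s_j^{m}]=\tfrac12(1+(-1)^m)$ forces $\bm\gamma=\bm\gamma'$ (this is the multi-jump analogue of the cancellation of odd powers of $\sqrt t$ behind \cref{eq:map}), leaving $\mathbb E_{\bm s}[\mathcal U_{\bm s}(\rho)]=\sum_{\bm\gamma}U_{\bm\gamma}^{(A)}\,\langle\bm\gamma|V^{(B,C)}(\rho\otimes\rho_C)V^{\dagger(B,C)}|\bm\gamma\rangle\,U_{\bm\gamma}^{\dagger(A)}$ with $\rho_C=\bigotimes_j|+\rangle\langle+|_j$. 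Finally I would rewrite $\langle\bm\gamma|X|\bm\gamma\rangle={\rm Tr}_C(X\,\Pi_{\bm\gamma})$ for any $X$ on $\mathcal S_A\otimes\mathcal S_B\otimes\mathcal S_C$, and combine with the $O(t^2)$ bound imported from \cref{thm:1} to obtain the claimed identity.

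I expect the only genuine obstacle to be bookkeeping of operator ordering: checking that pulling the factors $A_j^{\gamma_j}$ to the left in the expanded product really does not reorder the $A$'s (it does not, because each $A_j$ only commutes past $B$-type operators, never past another $A$), and that the adjoint in the right-hand slot of $\mathcal U_{\bm s}$ produces $\langle\bm+|V^{\dagger(B,C)}|\bm\gamma'\rangle U_{\bm\gamma'}^{\dagger(A)}$ with the correctly conjugated phases $\prod_j s_j^{\gamma'_j}$. Everything else is the elementary single-factor identity plus the parity average. One should also flag, as in \cref{thm:1}, that writing $\mathcal U_{\bm s}$ with a single common $e^{\pm iHt}$ and with the $e^{\pm is_j\sqrt t L_j}$ applied in a fixed order is itself only correct up to $O(t^2)$, which is already absorbed in the stated error.
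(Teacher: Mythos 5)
Your proposal is correct and takes essentially the same route as the paper: the single-factor identity $e^{is\sqrt{t}AB}=\cos(\sqrt{t}B)+isA\sin(\sqrt{t}B)$ re-expressed through ancilla matrix elements of $e^{i\sqrt{t}BZ}$ is exactly the decoupling step the paper carries out before the theorem, and the sign average forcing $\bm\gamma=\bm\gamma'$ is the same mechanism as the paper's $(s^{+}s^{-})^{\gamma}$ resummation. Your explicit treatment of the multi-jump ordering and of the commuting Hamiltonian factors, combined with the $O(t^2)$ bound imported from \cref{thm:1}, is precisely the generalisation the paper leaves implicit when it states that adding back the unitary evolution and several couplings yields the result.
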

The usefulness of this result is highlighted in the following lemma
\begin{lemma}[Decoupling lemma]\label{lem:decoup}
Assuming the same conditions as in \cref{theo:Dcube}; the time evolution of an initially unentangled density matrix between the subsystems $\rho_0=\rho_A\otimes\rho_B$ on the system $\mathcal{S}_A$ can be approximated as
\begin{align}\label{eqn:decoup_trace}
    \rho_A(t)={\rm Tr}_{\mathcal{S}_B}\left(e^{t\mathcal{L}}(\rho_0)\right)=\sum_{\bm\gamma=\{0,1\}^{N_LR}}{\rm Tr}({V}_t(\rho_B\otimes\Pi_{\bm 0})V_t^\dagger\Pi_{\bm\gamma}){U}_{\bm\gamma,t}(\rho_A){U}^\dagger_{\bm\gamma,t}+O(t^2/R)
\end{align}
where ${U}_{\bm\gamma,t}:=\prod_{k=1}^R \left(e^{-i\frac{t}{R}H_A}\prod_{j=1}^{N_L}A_j^{\gamma_{jk}}\right)$ and $V_t:=\prod_{j=1}^R\left(e^{-i\frac{t}{R}H_B}\prod_{k=1}^{N_L}e^{i\sqrt{\frac{t}{R}}B_kZ_{j,k}}\right)$. Here $\Pi_{\bm\gamma}$ is a projector onto the state $|\bm\gamma\rangle:=\otimes_{j,k}|\gamma_{j,k}\rangle$ where $|\gamma\rangle$ is the eigenstate of the Pauli X matrix with eigenvalue $(-1)^\gamma$. $Z$ denotes the conventional Pauli matrix. 
\end{lemma}

In the most general sense, $B_k$ can also have a $j$ (Trotter) index. This means that to simulate the time evolution on $\mathcal{S}_A$ we can stochastically sample unitary evolutions defined only on $\mathcal{S}_A$ according to a probability distribution that is governed by the evolution of $\mathcal{S}_B$ coupled to a number of qubits that grow with the number of Trotter steps.

The different algorithms for \gls{tds} of open systems in the context of \cref{thm:1} and \cref{theo:Dcube} are presented below as pseudocode in Algorithms \ref{alg:pseudocode_1} and \ref{alg:pseudocode_2}. 

\subsection{General connection of Lindbladian dynamics to \gls{iqp} circuits}
In this section, we discuss the general form of the circuits that appear on the ancillary system.
Consider the trace operation over subsystem $\mathcal{S}_{B}$ and the ancillary qubit system $\mathcal{S}_C$ in \cref{eqn:decoup_trace}, i.e.
\begin{equation}
    \centering
    \Omega_{\bm{\gamma}} = \text{Tr}\left(V_t (\rho_B \otimes  \rho_C) V_t^{\dagger} \Pi_{\bm{\gamma}}\right)
\end{equation}
Using the spectral decomposition of the operator $B_k=\sum_{\lambda}\lambda \Pi_{k,\lambda}$ the unitary $V_t$ becomes
\begin{align}
    V_{t}&=\sum_{\bm{\lambda}}\prod_{j=1}^{R}\left(e^{-i\frac{t}{R}H_{B}}\prod_{k=1}^{N_{L}}\Pi_{k,\lambda_{j,k}}^{B}e^{i\lambda_{j,k}\sqrt{t/N}Z_{j,k}}\right)
\end{align}
where the sum is over all possible values of the eigenvalues of each of the projector operators appearing in the product, such that 
\begin{align}
\Omega_{\bm\gamma}=\sum_{\bm\lambda\bm\lambda'}{\rm Tr}(O_{\bm\lambda}\rho_B O_{\bm\lambda'}^\dagger){\rm Tr}(\mathcal{O}_{\bm\lambda}\rho_C \mathcal{O}_{\bm\lambda'}^\dagger)
\end{align}
where
\begin{align}
    O_{\bm\lambda} := \prod_{j=1}^{R}\left(e^{-i\frac{t}{R}H_{B}}\prod_{k=1}^{N_{L}}\Pi_{k,\lambda_{j,k}}^{B}\right)\quad \mbox{and} \quad 
    \mathcal{O}_{\bm\lambda} := e^{i\sum_{j,k}\lambda_{j,k}\sqrt{t/N}Z_{j,k}}
\end{align}

$\mathcal{O}_{\bm{x}}$ and $\mathcal{O}_{\bm{y}}$ only consists of $R_z$ rotations, and recall that the ancillas are initialised in the $+1$ eigenstate of the $X$ basis, $\rho_C:=\otimes_{i=1}^{N_LR}|+\rangle\langle +|_i = \ket{\oplus} \bra{\oplus}$. Therefore 
\begin{align}
    \centering
    \Omega_{\bm{\gamma}} &= \sum_{\bm{\lambda},\bm{\lambda}'} \text{Tr}\left( O_{\bm{x}} \rho_B O_{\bm{y}}  \right) \bra{\bm{\gamma}} \mathcal{O}_{\bm{x}} \ket{\oplus} \bra{\oplus} \mathcal{O}_{\bm{y}} \ket{\bm{\gamma}}
\end{align}
where the action on the ancilla subsystem generates an \gls{iqp} circuit. In the case $\bm{x}=\bm{y}$, the ancilla part is exactly an \gls{iqp} probability, such that $\Omega_{\bm\gamma}$ can be interpreted as a Monte carlo sum where we sample from an \gls{iqp} circuit. Ultimately, the operator $B_k$ can only introduce correlations in the ancilla system of the form $e^{i\theta \prod_{l=1}^J Z_{j_l}}$ for $J\leq N_L R-1$ and some qubit index $j_l$. Therefore $\Omega_{\gamma}$ will always be a CPTP map consisting of only commuting gates on the ancilla system. An explicit example is discussed in the application to electron-phonon systems \cref{eqn:Gamma_explicit}, where the explicit trace over the subsystem $B$ generates an emergent \gls{iqp} circuit made of exponentials of single and quadratic $Z$ monomials.

\subsection{Extension to general Lindbladians}

Up to this point, we have concentrated in the approximation of unital Lindbladians, which, by virtue of \cref{thm:1} are implementable sampling unitary channels. Using ancillas, it is possible to generalise this arbitrary Lindbladians. Consider the map \cref{eq:map} with the hermitian Lindblad operator $L=(\ell\sigma_a^++\ell^\dagger \sigma_a^-)$. 
\begin{align}
    \mathcal{E}(\rho):=e^{-iHt}\frac{1}{2}\left(e^{i{\sqrt{t}}(\ell\sigma_a^++\ell^\dagger \sigma_a^-)}\rho e^{-i{\sqrt{t}}(\ell\sigma_a^++\ell^\dagger \sigma_a^-)}+e^{-i{\sqrt{t}}(\ell\sigma_a^++\ell^\dagger \sigma_a^-)}\rho e^{i\sqrt{t}(\ell\sigma_a^++\ell^\dagger \sigma_a^-)}\right)e^{iHt},
\end{align}
here $\ell$ is some arbitrary operator on the target system $\rho_s$ and $\sigma_a^\pm=\frac{1}{2}(X_a\pm i Y_a)$ is the common raising/lowering operator acting on an ancillary qubit. Applying this map to the density matrix $\rho = \rho_s\otimes (|0\rangle \langle 0|)_a$ using \cref{thm:1} and tracing away the ancillary qubit generates the approximation
\begin{align}
    {\rm Tr}_a(\mathcal {E}(\rho_s(|0\rangle\langle 0 |)_a))=e^{t\mathcal{L}}(\rho_s)+O(t^2)
\end{align}
with
\begin{align}
    \mathcal{L}(\rho_s)=-i[H,\rho_s]+\ell \rho_s \ell^\dagger - \frac{1}{2}\{\ell^\dagger \ell, \rho_s\}.
\end{align}
Adding more ancillas this can be trivially generalised to the case of different jump operators.

\begin{algorithm}

\SetAlgoLined
\KwResult{An approximation of the expectation of a time evolved operator $\hat{O}(t)$ under the Lindbladian
$\mathcal{L}(\rho)=-i[H,\rho]+\sum_{j=1}^N L_j \rho L_j -\frac{1}{2}\{L_j^2,\rho\}$}
 Given a Hamiltonian $H=\sum_{k=1}^l h_k$ and a set of quantum jump operaror $L_j$ with $j\in\{1,\dots,N_L\}$ where $h_k$ and $L_j$ are operators such that $e^{ith_k}$ and $e^{i\sqrt{t}L_j}$ are directly implementable, a state $\rho_0$, an operator $\hat{O}$, an evolution time $t$ a number of Trotter steps $R$ and a number of measurements $M$,

\medskip
\For{$p=1..M$}{
\begin{itemize}

\item Sample the $N_L\times R$ random variables $\bm s := \{\{s_{jk}\}_{j=1}^{N_L}\}_{k=1}^R$ from the uniform distribution where\\ $s_{jk}=\{-1,1\}$.

\item Construct the unitary
\begin{align*}
{U}_{\bm s}:= \prod_{r=1}^R\left(\prod_{k=1}^le^{-i\frac{t}{R}h_k}\prod_{j=1}^{N_L} e^{s_{jr}i\sqrt{\frac{t}{R}}L_j}\right)
\end{align*}

\item Measure $x_p:={\rm Tr}(U_{\bm s}\rho_0 U_{\bm s}^\dagger O)$
\end{itemize}
}
Output $\frac{1}{M}\sum_{p=1}^M x_p$

 \caption{Approximating the time evolution of systems with dephasing}\label{alg:pseudocode_1}
\end{algorithm}

\begin{algorithm}

\SetAlgoLined
\KwResult{An approximation of the expectation of a time evolved operator $\hat{O}(t)$ under Lindbladian evolution
$\mathcal{L}(\rho)=-i[H,\rho]+\sum_{i=1}^{N_L}A_jB_j\rho B_jA_j-\frac{1}{2}\{B_j^2,\rho\}$ with $A_j^2=1$ and $[B_j,A_k]=0$ $\forall j,k$}
 Given a Hamiltonian $H=H_A+H_B$ with $H_{\alpha}$ acting on subsystem $\mathcal{S}_{\alpha}$ and $H_{\alpha}=\sum_{k=1}^l h_{\alpha,k}$ and a set of quantum jump operator $A_jB_j$ with $j\in\{1,\dots,N_L\}$ where $h_{\alpha,k}$ and $B_j$ are operators such that $e^{ith_{\alpha,k}}$ and $e^{i\sqrt{t}B_jZ_a}$ are directly implementable, with $Z_a$ a Pauli $Z$ operator acting on an ancilla qubit; a state $\rho_0$, an operator $\hat{O}$, an evolution time $t$ a number of Trotter steps $R$ and a number of measurements $M$

\medskip
\For{$p=1..M$}{
\begin{itemize}
\item Construct the state $\rho_C:=\otimes_{i=1}^{N_LR}|+\rangle\langle +|_i$ using $N_LR$ ancillas
\item Define the unitary
\begin{align*}
{V}:=\prod_{j=1}^R\left(\prod_{k=1}^le^{-i\frac{t}{R}h_{B,k}}\prod_{k=1}^{N_L}e^{i\sqrt{\frac{t}{R}}B_kZ_{j,k}}\right)
\end{align*}
and the circuit $V\rho_0\otimes\rho_C V^\dagger$
\item Measure the ancillas in the $X$ Pauli basis. Store the resulting bitstring as ${\bm\gamma}_p$. \\
The remaining state is $\tilde \rho$

\item Define the unitary ${U}_{\bm\gamma}:=\prod_{k=1}^R \left(\prod_{k=1}^le^{-i\frac{t}{R}h_{A,k}}\prod_{j=1}^{N_L}A_j^{\gamma_{jk}}\right)$ and the circuit $U_{\bm \gamma}\tilde\rho U_{\bm \gamma}^\dagger$.
\item Measure $x_p:={\rm Tr}(U_{\bm \gamma }\rho_0 U_{\bm \gamma}^\dagger \hat{O})$
\end{itemize}
}
Output $\frac{1}{M}\sum_{p=1}^M x_p$

\caption{Approximating the time evolution of systems interacting through dephasing}\label{alg:pseudocode_2}
\end{algorithm}

In the next section we discuss applications of these algorithms, in particular the connection with the study of disordered systems \cref{sec:disorder} and the physics of electron-phonon coupled systems  \cref{sec:elec_phonon}.

\section{Applications}\label{sec:applic}

\subsection{Connection with simulation of disordered systems}
\label{sec:disorder}
\cref{thm:1} can be used to show that dephasing appears naturally in the study of disordered systems, where the value of an observable corresponds to the ensemble average over some disorder distribution. Here we show explicitcly that for particular disordered systems, the average over disorder corresponds to a Lindbladian evolution with dephasing and rescaled parameters. 

More precisely, consider the task of simulating the evolution under a random Hamiltonian $H(\xi):=H_0+\xi H_1$, where $\xi$ is sampled according to some probability distribution $p(\xi)$ and the results are averaged over $p(\xi)$. 
Consider the following disorder average of a time evolved operator $O$ on the state $|\Psi\rangle$
\begin{align}
    \mathbb E\langle O(t)\rangle_\xi := \mathbb E(\langle \Psi| e^{itH(\xi)} O e^{-itH(\xi)}|\Psi\rangle):=\sum_\xi p(\xi)\langle \Psi| e^{itH(\xi)} O e^{-itH(\xi)}|\Psi\rangle
\end{align} 
where we assume that $\xi\in [a,b]$ is sampled uniformly. Here $H_{0,1}$ are Hamiltonians that are independent of the random variable $\xi$.
We can rescale $\xi$ in terms of a new random variable $\tilde{\xi}$ uniformly distributed in the interval [0,1] as $\tilde{\xi} :=\frac{\xi-a}{b-a}$. Using the binary representation of $\xi$ we have
\begin{align}
    \tilde{\xi}=\sum_{n=1}^\infty\frac{\xi_n}{2^n}\rightarrow\xi =(b-a)\sum_{n=1}^\infty\frac{\xi_n}{2^n}+a
\end{align}
where $\xi_n\in\{0,1\}$ is a discrete random variable. Finally, defining $s_n:=2\xi_n-1\in\{-1,1\}$ we have
\begin{align}
    H(\xi)=H_0+\xi H_1 = H_{0}+\frac{(b+a)H_{1}}{2}+\sum_{n=1}^{\infty}s_{n}\frac{(b-a)}{2^{n+1}}H_{1}
\end{align}
The first order Trotterisation of the evolution by this Hamiltonian is 
\begin{align}
    e^{itH(\xi)}=e^{it(H_0+\frac{a+b}{2}H_1)}\prod_{n=1}^\infty e^{s_ni\sqrt{t}\left(\sqrt{t}(b-a)\frac{H_1}{2^{n+1}}\right)}+O(t^2)
\end{align}
Using \cref{thm:1} we find that the average channel $\mathbb E(U_\xi \rho U_\xi^\dagger)=\sum_\xi p(\xi)e^{itH(\xi)}\rho e^{-itH(\xi)}$ is approximated by the Lindbladian
\begin{align}
    \mathcal{L}(\rho)&= -i[H_0+\frac{a+b}{2}H_1,\rho]+\frac{t(b-a)^2}{12}\left(H_1\rho H_1 - \frac{1}{2}\{H_1^2,\rho\}\right)
\end{align}
with an error $O(t^2)$ as long as $b-a\sim~O(t^{-\frac{1}{2}})$.
This implies that the disorder average with a disorder sampled from an interval that decreases like $t^{-\frac{1}{2}}$ can be approximated by a Lindbladian evolution where the dephasing is induced by the disordered term. For other disorders, the dephasing process is not captured by a time independent and local Lindbladian, but can still be captured by a mixed unital quantum channel, of which a generator in time can be defined.

\subsection{Electron-phonon systems}
\label{sec:elec_phonon}

The main application of \cref{theo:Dcube} that we consider in this work is the dynamics of electrons coupled to phonons, where the coupling appears in the form of a quantum jump operator $\ell_i = n_i x_i$ composed of the electron density $n_i$ and the position operator of the boson around the lattice site $i$, $x_i$.
This quantum jump operator induces coupled dissipative dynamics on an electronic system representing electrons on a lattice and a bosonic system representing phonon modes related to the vibrations of the lattice around its equilibrium position. As it is common, we consider only one phononic mode per lattice site, corresponding to the coupling with the most relevant optical phonon mode.
These type of interactions have been considered in the context of pure Hamiltonian description in the Holstein~\cite{holstein1959studies} and Fröhlich Hamiltonians~\cite{frohlich1954electrons,tempere2009feynman}. The former includes localised electron-phonon coupling and is able to capture the emergence of polaron quasiparticles (particles where the charge is dressed by lattice distortions, creating a net (multi)pole). In contrast the latter includes long-range interactions and serves to understand symmetry breaking phenomena like BCS-superconductivity~\cite{bardeen1957theory} and the formation of charge-density waves~\cite{kvande2023enhancement}. 
These models often work in the harmonic approximation, where the purely bosonic Hamiltonian is reduced to an harmonic potential. Moreover, in this context the \gls{epc} terms are also often limited to a single electron coupled to a single phonon. 

There is a rich history of methods developed to study electron-phonon models, from Feynman's variational path-integral formalism~\cite{feynman1955slow} to \gls{dmc}~\cite{prokof2008bold,mishchenko2014diagrammatic,hahn2018diagrammatic}. Feynman's approach is a semi-analytical theory, where all bosonic degrees of freedom are traced away and their effects incorporated in fermionic self-interactions. This theory has been shown to fail in  dynamical scenarios \cite{sels2016dynamic} and intermediate coupling regimes~\cite{vlietinck2015diagrammatic}. To address dynamics,  \gls{dmc} performs a Wick rotation to map the problem into imaginary time, avoiding large oscillations in real time evolution. As a result analytical continuation is needed to recover back the real time signal. This procedure is in practice ill-defined and without theoretical guarantees of convergence. Recent works have attempted to circumvent this issue by using symbolic analytical continuation~\cite{taheridehkordi2019algorithmic,vuvcivcevic2020real}, which better controls the rotation back to real time, but also lack guarantees in general.



The accurate treatment of anharmonic effects is computationally challenging~\cite[\& references therein]{houtput2024anharmonic} but essential for understanding real materials. For example, in a material approaching a structural phase transition some vibration modes will have a frequency going to zero in the harmonic case. Including non-harmonic contributions controls the type of structural transition that occurs ~\cite{cochran1960crystal}. Similarly, phonon-phonon interactions, a direct consequence of anharmonicity, greatly alter electron-phonon scattering rates in the study of transport~\cite{mellan2019electron}. 

Recently, several proposals to study these anharmonic effects have appeared in the literature, by extending analytic methods \cite{houtput2024anharmonic}, self-consistently through \gls{scp} theory~\cite{hooton1955li, werthamer1970self, zacharias2023anharmonic}, or using coupled cluster~\cite{white2020coupled} and \gls{qmc}~\cite{paleari2021quantum}.

In this section we study anharmonic effects in the dynamical evolution of an electron-phonon system by analysing a system composed of subsystems $F$ which is fermionic, $Q$ and $B$ bosonic, each with internal dynamics described by the Hamiltonians $H_{F}$, $H_{Q}$ and $H_{B}$ respectively.  We assume the following anharmonic coupling
\begin{align}
    H_c =  \frac{1}{2}\sum_{i=1}^L g_i (q^{\dagger}_i + q_i) (b^{\dagger}_i + b_i) (f^{\dagger}_i f_i - \frac{1}{2}),
\end{align}
where $q_i$ $(q_i^\dagger)$ is the bosonic destruction (creation) operator corresponding to a Brownian particle at site $i$, $b^\dagger_i$ and $b_i$ are bosonic creation and destruction operators of an external bath, and $f_i$ $(f_i^\dagger)$ are fermionic destruction (creation) operators, all defined at the same lattice site. Assuming an harmonic bath, it is possible to trace the effect of the bath by expanding the von Neumann equation to second order in the anharmonic interaction. This procedure (see \cref{appendix:epi_lindbladian_derivation} for details), generates the following Lindbladian


\begin{align}\label{eq:Lind_eph}
\mathcal{L}(\rho) &=-i[H_{F}+H_{Q},\rho]+\frac{1}{4}\sum_{j=1}^L g_j^2\mathcal{L}_{x_j(2n_j-1)}(\rho),
\\&= -i [H_{F}+H_{Q},\rho]+\sum_{j=1}^L g_j^2\left(x_{j}\mathcal{L}_{n_{j}}(\rho)x_{j}+\frac{1}{4}\mathcal{L}_{x_{j}}(\rho)\right),
\end{align}
with $\mathcal{L}_{\ell}(\rho):=\ell\rho(t)\ell^\dagger-\frac{1}{2}\{\ell^\dagger \ell,\rho\}$ and $x_j =\frac{q_j+q_j^\dagger}{\sqrt{2}}$. Here  the usual dephasing Lindbladian for electrons $\mathcal{L}_{n_j}(\rho)$ ~\cite{dolgirev2020non} is multiplied by the position of the quantum Brownian particle, producing an interaction dependent dephasing on the fermionic sector. $\mathcal{L}_x(\rho)$ is the dissipative term appearing in 
\gls{qbm}~\cite{huang2022exact}, a prevalent model describing systems coupled to bosonic baths.  Traditional methods to tackle the \gls{qbm} part using \gls{bmme}~\cite{redfield1957theory,davies1974markovian,wangsness1953dynamical,Lindblad1976} often violates the positivity condition of the density matrix due to the loss of Markovianity in the construction, and lead to the break-down of Heisenberg's uncertainty principle. Using a Lindbladian construction~\cite{lampo2016lindblad} these problems dissapear.

The Lindbladian \cref{eq:Lind_eph} satisfies the requirements of the decoupling \cref{lem:decoup} and as such the electron-phonon dynamics can be completely decoupled via the introduction of Ising-like variables (ancillary qubits) that mediate interactions between the subsystems $F$ and $Q$. This is very useful as it allows completely tracing the bosonic degrees of freedom. 
This result is captured in the following lemma
\begin{lemma}{(Boson - Qubit duality in dissipative evolution.)}\label{lem:boson_qubit}
    The fermionic density matrix $\rho_F(t)={\rm Tr}_Q(e^{t\mathcal{L}}\rho(0))$ of a system evolving under the Lindbladian \cref{eq:Lind_eph}
    \begin{align}\label{eqn:L_epi}
        \mathcal{L}(\rho) =-i[H_{F}+H_{Q},\rho]+\frac{1}{4}\sum_{j=1}^L g_j^2\mathcal{L}_{x_j(2n_j-1)}(\rho),
    \end{align}
    where $H_{Q}=\sum_{j=1}^L \omega_j q_j^\dagger q_j$ is a sum of harmonic Hamiltonians over all the bosons and the initial state is unentangled $\rho(0)=\rho_F\otimes|0\rangle \langle 0|$, can be approximated to arbitrary accuracy by the mixed unital channel
    \begin{align}\label{eq:rho_F_lem}
            \rho_{F}(t)=\sum_{\bm{\gamma}=\{0,1\}^{LN}}{\rm Tr}\left(e^{t\mathcal{L}_{{\rm IQP}}(t)}(\Pi_{\bm{0}})\Pi_{\bm{\gamma}}\right)U_{\bm{\gamma},t}(\rho_{F})U_{\bm{\gamma},t}^{\dagger}+O(t^{2}/N),
    \end{align}
    following the definitions of \cref{lem:decoup} and the effective Lindbladian on the ancilla qubits
    \begin{align}\label{eq:iqp}
        \mathcal{L}_{\rm IQP}(t)[\rho]:=-i[H_{\rm IQP}(t),\rho]+\sum_{k=1}^L \left(\ell_k(t)\rho\ell^\dagger_k(t)-\frac{1}{2}\left\{\ell^\dagger_k(t)\ell_k(t),\rho\right\}\right),
    \end{align}
where $H_{\rm IQP}(t):=\sum_{k=1}^L\frac {g_k^2}{8N}\sum_{j>l}Z_{j,k}Z_{l,k}\sin\left(\omega_k\frac{(l-j)t}{N}\right)$, $\ell_k(t)=\frac{g_{k}}{2}\sqrt{\frac{1}{2N}}\sum_{j=1}^{N}e^{i\omega_{k}\frac{jt}{R}}Z_{j,k}$ and $N$ is the number of Trotter steps.
\end{lemma}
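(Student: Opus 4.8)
The plan is to recognise \cref{eqn:L_epi} as an instance of the interacting dephasing Lindbladian treated by \cref{lem:decoup}, apply that lemma, and then evaluate \emph{exactly} the bosonic partial trace it produces, using that $H_Q$ is quadratic. First I would write the dissipative term $\tfrac14 g_j^2\mathcal{L}_{x_j(2n_j-1)}$ with jump operator $L_j=A_jB_j$, where $A_j=2n_j-1$ acts on $F$ and $B_j=\tfrac{g_j}{2}x_j$ acts on $Q$: since $n_j^2=n_j$ one has $A_j^2=\mathbb{I}$, and $[A_j,B_k]=0$ because fermionic and bosonic operators commute. Hence the hypotheses of \cref{theo:Dcube} and \cref{lem:decoup} hold with $\mathcal{S}_A=F$, $\mathcal{S}_B=Q$, $H_A=H_F$, $H_B=H_Q$, $N_L=L$, $\rho_B=|0\rangle\langle 0|$ and $R=N$. \cref{lem:decoup} then yields a representation of $\rho_F(t)$ of exactly the form \cref{eq:rho_F_lem}, with $U_{\bm\gamma,t}=\prod_k\big(e^{-i\frac tN H_F}\prod_j(2n_j-1)^{\gamma_{jk}}\big)$, but with $\mathrm{Tr}\big(V_t(|0\rangle\langle 0|_Q\otimes\Pi_{\bm 0})V_t^\dagger\Pi_{\bm\gamma}\big)$ in place of $\mathrm{Tr}\big(e^{t\mathcal{L}_{\rm IQP}(t)}(\Pi_{\bm 0})\Pi_{\bm\gamma}\big)$ and error $O(t^2/N)$. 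Since $\Pi_{\bm\gamma}$ acts only on the ancilla register, it remains to prove the operator identity on that register, $\mathrm{Tr}_Q\big(V_t(|0\rangle\langle 0|_Q\otimes\Pi_{\bm 0})V_t^\dagger\big)=e^{t\mathcal{L}_{\rm IQP}(t)}(\Pi_{\bm 0})$.

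Next I would compute the left-hand side by diagonalising $V_t$ over the computational ($Z$) basis of the ancillas. Because every gate $e^{i\sqrt{t/N}\frac{g_k}{2}x_kZ_{j,k}}$ is controlled by its ancilla's Pauli $Z$ and all $Z_{j,k}$ commute, $V_t$ fixes each $|\bm z\rangle$ and acts on $Q$ by a $\bm z$-dependent Gaussian unitary $W(\bm z)$. Using $e^{i\lambda x}=D(i\lambda/\sqrt2)$ for the conditional displacements (with $x_k=(q_k+q_k^\dagger)/\sqrt2$), $e^{-i\theta n}D(\mu)e^{i\theta n}=D(\mu e^{-i\theta})$ to commute displacements through the free rotations $e^{-i\frac tN H_Q}$, and the Weyl rule $D(\alpha)D(\beta)=e^{i\,\mathrm{Im}(\alpha\bar\beta)}D(\alpha+\beta)$, I obtain $W(\bm z)|0\rangle=e^{i\Phi(\bm z)}\bigotimes_k|\psi_k(\bm z)\rangle$, where $\psi_k(\bm z)$ is a coherent amplitude linear in the $z_{j,k}$ (the trailing free rotation only multiplies each $\psi_k(\bm z)$ by a $\bm z$-independent phase $e^{-i\omega_k t}$) and $\Phi(\bm z)$ is the accumulated Weyl phase, quadratic in the $z_{j,k}$. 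Since $\Pi_{\bm 0}=|+\rangle\langle+|^{\otimes LN}=2^{-LN}\sum_{\bm z,\bm z'}|\bm z\rangle\langle\bm z'|$ and $V_t$ fixes the $|\bm z\rangle$, the partial trace equals $2^{-LN}\sum_{\bm z,\bm z'}\langle 0|W(\bm z')^\dagger W(\bm z)|0\rangle\,|\bm z\rangle\langle\bm z'|$, and $\langle 0|W(\bm z')^\dagger W(\bm z)|0\rangle=e^{i(\Phi(\bm z)-\Phi(\bm z'))}\prod_k\langle\psi_k(\bm z')|\psi_k(\bm z)\rangle$, which I evaluate with $\langle\alpha|\beta\rangle=\exp(\bar\alpha\beta-\tfrac12|\alpha|^2-\tfrac12|\beta|^2)$.

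Finally I would match this against the right-hand side. Since $H_{\rm IQP}(t)$ and each $\ell_k(t)$ are diagonal in the $Z$ basis, $\mathcal{L}_{\rm IQP}(t)$ multiplies each dyad $|\bm z\rangle\langle\bm z'|$ by a scalar, so $e^{t\mathcal{L}_{\rm IQP}(t)}(\Pi_{\bm 0})=2^{-LN}\sum_{\bm z,\bm z'}e^{t\lambda(\bm z,\bm z')}|\bm z\rangle\langle\bm z'|$ with $\lambda(\bm z,\bm z')=-i\big(h(\bm z)-h(\bm z')\big)+\sum_k\big(\ell_k(\bm z)\overline{\ell_k(\bm z')}-\tfrac12|\ell_k(\bm z)|^2-\tfrac12|\ell_k(\bm z')|^2\big)$, where $h(\bm z)=\langle\bm z|H_{\rm IQP}(t)|\bm z\rangle$ and $\ell_k(\bm z)=\langle\bm z|\ell_k(t)|\bm z\rangle$. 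Comparing with the boson-trace expression, the identity reduces to two scalar facts valid for every pair $(\bm z,\bm z')$: (i) $\psi_k(\bm z)=\sqrt t\,\ell_k(\bm z)$ up to a $\bm z$-independent phase, which cancels in $|\psi_k|^2$ and in $\overline{\psi_k(\bm z')}\psi_k(\bm z)$ — this is where the rotation factors $e^{i\omega_k jt/R}$ in $\ell_k(t)$ come from, namely from the $e^{-i\frac tN H_Q}$ blocks acting after the $j$-th displacement — so that $\prod_k\langle\psi_k(\bm z')|\psi_k(\bm z)\rangle=\exp\big(t\sum_k(\ell_k(\bm z)\overline{\ell_k(\bm z')}-\tfrac12|\ell_k(\bm z)|^2-\tfrac12|\ell_k(\bm z')|^2)\big)$; and (ii) $\Phi(\bm z)=-t\,h(\bm z)$, which follows by writing $\Phi(\bm z)=\sum_k\frac{g_k^2 t}{8N}\sum_{a>b}z_{a,k}z_{b,k}\sin\!\big((a-b)\omega_k t/N\big)$ and comparing coefficients of $z_{j,k}z_{l,k}$ to read off $\tfrac{g_k^2}{8N}\sin(\omega_k(l-j)t/N)$ for $j>l$, matching $H_{\rm IQP}(t)$. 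With (i) and (ii) the two expressions coincide term by term and \emph{exactly}, so the only error is the $O(t^2/N)$ already present in \cref{lem:decoup}.

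The main obstacle is the phase bookkeeping in the middle step: correctly counting how many free-rotation blocks dress each conditional displacement (this fixes both the $e^{i\omega_k jt/R}$ weights in $\ell_k(t)$ and the $\sin((a-b)\omega_k t/N)$ structure of the Weyl phase), tracking the sign convention in $H_{\rm IQP}(t)$, and verifying that all $\bm z$-independent phases together with the trailing coherent-state rotation genuinely drop out of the partial trace. It is also worth emphasising that the exact closure relies on $H_Q$ being quadratic: for an anharmonic $H_Q$ the conditional displacements would no longer drive the bosons into coherent states, and the trace over $Q$ would not reduce to a finite-dimensional Lindbladian on the ancillas.
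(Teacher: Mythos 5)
Your proposal is correct and takes essentially the same route as the paper: invoke \cref{lem:decoup} with $A_j=2n_j-1$, $B_j=\tfrac{g_j}{2}x_j$, then use the quadratic $H_Q$, the displacement-operator (Weyl/BCH) algebra and the vacuum initial state to evaluate the bosonic trace exactly and identify the quadratic Weyl phase with $e^{-itH_{\rm IQP}(t)}$ and the rotated displacement amplitudes with the jump operators $\ell_k(t)$. The only difference is presentational — you condition on the ancilla $Z$-eigenbasis and match the scalars $e^{t\lambda(\bm z,\bm z')}$ against coherent-state overlaps, whereas the paper performs the same BCH manipulation at the operator level before taking the vacuum expectation — but the mathematical content coincides.
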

\begin{proof}
 Directly applying \cref{lem:decoup} on the Lindbladian \cref{eq:Lind_eph}, we have
\begin{align}\label{eq:rho_eph}
        \rho_F(t)=\sum_{\bm\gamma=\{0,1\}^{LN}}{\rm Tr}\left(V_t(\rho_Q\otimes\Pi_{\bm 0})V_t^\dagger\Pi_{\bm\gamma}\right){U}_{\bm\gamma,t}(\rho_F){U}^\dagger_{\bm\gamma,t}+O(t^2/N),
\end{align}
where
\begin{align}\label{eq:Vt}
    V_t:=\prod_{j=1}^N\left(e^{-i\frac{t}{N}H_Q}\prod_{k=1}^Le^{i\sqrt{\frac{t}{N}}g_k\frac{x_k}{2}Z_{j,k}}\right) = e^{-itH_Q}\prod_{j=1}^N\left(\prod_{k=1}^L\exp\left[{\frac{ig_k}{2}\sqrt{\frac{t}{N}}{x_k}\left(\frac{jt}{N}\right)Z_{j,k}}\right]\right).
\end{align}

The last equality is an identity that follows from the repeated application of
\begin{align}
    e^{-tA}e^{-tB}e^{-tA}e^{-tB}=e^{-2tA}(e^{tA}e^{-tB}e^{-tA})(e^{-tB})=e^{-2tA}(e^{-tB(t)})(e^{-tB(0)}),
\end{align}
and $x_j(t):=e^{itH_Q}x_je^{-iH_Qt}$. Assuming a quadratic Hamiltonian for the quantum Brownian particle $H_Q:=\sum_i {\omega_i} q_i^\dagger q_i$, the time evolved operator $x_j(t)$ becomes simply $x_j(t)=\frac{1}{\sqrt{2}}(q_je^{-i\omega_jt}+q_j^\dagger e^{i\omega_j t})$. In this case we can combine all the products in $V_t$ \cref{eq:Vt} over a single site using the Baker-Campbell-Haussdorff (BCH) relation $e^Ae^B=e^{A+B}e^{\frac{1}{2}[A,B]}$, valid whenever $[A,B]$ commutes with $A$ and $B$. After some straighforward algebra we find
\begin{align}\nonumber
    V_t
    &=e^{-itH_Q}\prod_{k=1}^L \exp{\left(\frac{ig_k}{2}\sqrt{\frac{t}{2N}}\sum_{j=1}^N\left[{({q_ke^{-i\omega_k\frac{jt}{N}}+q_k^\dagger e^{i\omega_k\frac{jt}{N}}})Z_{j,k}}\right]\right)}e^{-\frac {ig_k^2t}{8N}\sum_{j>l}Z_{j,k}Z_{l,k}\sin\left(\omega_k\frac{(l-j)t}{N}\right)}.
    \label{eq:Vt_final}
\end{align}
Introducing the displacement operator $D(\alpha):=e^{\alpha^\dagger - \alpha}$, and defining  $\eta_k=\left[-\frac{ig_k}{2}\sqrt{\frac{t}{2N}}\sum_{j=1}^Ne^{-i\omega_k\frac{jt}{N}}Z_{j,k}\right]$ \cref{eq:rho_eph} becomes
\begin{align}
    \rho_F(t)=\sum_{\bm\gamma=\{0,1\}^{LN}}{\rm Tr}\left( e^{-itH_{\rm IQP}(t)} D(\bm q \cdot \bm\eta)(\rho_Q\otimes \Pi_{\bm 0}) D(-\bm q \cdot \bm\eta)e^{itH_{\rm IQP}(t)}\Pi_{\bm\gamma}\right){U}_{\bm\gamma,t}(\rho_F){U}^\dagger_{\bm\gamma,t}+O(t^2/N)
\end{align}
with $\bm q \cdot \bm\eta=\sum_{k=1}^Lq_k\eta_k$ and $H_{\rm IQP}(t):=\sum_k\frac {g_k^2}{8N}\sum_{j>l}Z_{j,k}Z_{l,k}\sin\left(\omega_k\frac{(l-j)t}{N}\right)$. Finally, assuming that the initial state of the quantum Brownian particle is the vaccum, and using the algebra of the displacement operator $D(\bm q \cdot \bm\eta)=e^{-\frac{1}{2} \bm\eta^*\cdot \bm\eta}e^{\bm q^\dagger \cdot \bm\eta^*}e^{\bm q \cdot \bm\eta}$ that follows from BCH, we have
\begin{align}
     {\rm Tr}_Q\left(D(\bm q \cdot \bm\eta)(\rho_Q\otimes \Pi_{\bm 0}) D(-\bm q \cdot \bm\eta)\right)=e^{-\frac{1}{2}\bm{\eta}^{*}\cdot\bm{\eta}}{\rm Tr}_{Q}\left(e^{\bm{q}^{\dagger}\cdot\bm{\eta}^{*}}|0\rangle\Pi_{\bm{0}}\langle0|e^{\bm{q}\cdot\bm{\eta}}\right)e^{-\frac{1}{2}\bm{\eta}^{*}\cdot\bm{\eta}}.
\end{align}
It is possible to compute the trace over the boson degrees of freedom by writing the projector $\Pi_{\bm 0}$ in the computational basis of the ancillary Ising qubits. In this basis the exponents on each side of the projector act as phases in the ancillary space. Concretely
\begin{align}
    {\rm Tr}_{Q}\left(e^{\bm{q}^{\dagger}\cdot\bm{\eta}^{*}}|0\rangle\Pi_{\bm{0}}\langle0|e^{\bm{q}\cdot\bm{\eta}}\right)=\frac{1}{2^{LN}}\sum_{\bm{s},\bm{s'}}|\bm{s}\rangle\langle\bm{s'}|\langle0|e^{\bm{q}\cdot\bm{\eta}(\bm{s'})}e^{\bm{q}^{\dagger}\cdot\bm{\eta}^{*}(\bm{s})}|0\rangle=\frac{1}{2^{LN}}\sum_{\bm{s},\bm{s'}}|\bm{s}\rangle\langle\bm{s'}|e^{\bm{\eta}^{*}(\bm{s})\cdot\bm{\eta}(\bm{s'})}
\end{align}
where we have used the BHC relation $e^{A}e^{B}=e^{[A,B]}e^{B}e^{A}$  and that the vacuum is invariant under the action of the exponential of bosonic destruction operators. Putting everything together we have
\begin{align}
    \rho_{F}(t)=\sum_{\bm{\gamma}=\{0,1\}^{LN}}{\rm Tr}\left(e^{t\mathcal{L}_{{\rm IQP}}(t)}(\Pi_{\bm{0}})\Pi_{\bm{\gamma}}\right)U_{\bm{\gamma},t}(\rho_{F})U_{\bm{\gamma},t}^{\dagger}+O(t^{2}/N)
    \label{eq:rho_f_t}
\end{align}
where
\begin{align}
    \mathcal{L}_{\rm IQP}(t)[\rho]:=-i[H_{\rm IQP}(t),\rho]+\sum_{k=1}^L \left(\ell_k(t)\rho\ell^\dagger_k(t)-\frac{1}{2}\left\{\ell^\dagger_k(t)\ell_k(t),\rho\right\}\right),
\end{align}
and $\ell_k(t)=\frac{g_{k}}{2}\sqrt{\frac{1}{2N}}\sum_{j=1}^{N}e^{i\omega_{k}\frac{jt}{N}}Z_{j,k}$.
\end{proof}
This is a remarkable result. It implies that the (dissipative) evolution of a fermion-boson system \cref{eq:Lind_eph} can be approximated to arbitrary precision with a fermion-qubit (dissipative) evolution. The evolution on the fermion-qubit side can be understood as the result of sampling the probability distribution of the ancillary qubits
\begin{align}
    \Gamma_{\bm{\gamma}}(t):={\rm Tr}\left(e^{t\mathcal{L}_{{\rm IQP}}(t)}(\Pi_{\bm{0}})\Pi_{\bm{\gamma}}\right),
\end{align}
and according to the outcome $\bm{\gamma}$ applying the unitary $U_{\bm \gamma,t}$ on the fermionic system. Classically sampling from this type of probability distributions, which fall under the category of \gls{iqp} circuits, even approximately, will result in  $\mathsf{BPP}^{\mathsf{NP}} =\#\mathsf{P}$ and cause a collapse in the polynomial hiearchy~\cite{Bremner_2016}. Finding an efficient classical protocol to do so would collapse the polynomial hierarchy at the third level \cite{Bremner_2010}. Of course, this does not prevent the classical simulation of this {\it particular} family of circuits, but nonetheless we think this connection is particularly interesting, hinting to the classical complexity of the problem and giving \gls{iqp} circuits a concrete application in quantum simulation.

In the next section we discuss a circuit construction to approximate the Linbladian on the ancilla qubits, thus providing a concrete realisation of \cref{lem:boson_qubit}.

\subsection{Circuit construction of auxiliary dissipative Lindbladian}
\label{sec:iqp_details}

Using \cref{lem:boson_qubit}, we can approximate the density matrix of the fermionic system by sampling the bitstring ${\bm \gamma}$ from the probability $\Gamma_{\bm \gamma}(t)$. This corresponds to sampling from an \gls{iqp} circuit, where all the gates are commuting. The generator of this evolution is the Linbladian \cref{eq:iqp} 
composed of two contributions, one given by $H_{\rm IQP}$ which can be trivially implemented using a circuit with no ancillas, and the dissipative part whose circuit approximation we discuss here. Writing explicitly the action of these contributions we find
\begin{align}
    \Gamma_{\bm{\gamma}}(t)&= \langle\oplus|U_{\rm IQP}(t) e^{t\mathcal{D}_{\rm IQP}(t)}(\Pi_{\bm 0}) U^\dagger_{\rm IQP}(t)   |\oplus\rangle
\label{eqn:Gamma_explicit}
\end{align}
where $\ket{\oplus} = (H\ket{0})^{\otimes NL}$ is the superposition of all computational basis states on $NL$ qubits and $H$ is the Hadamard gate, while
\begin{equation}
U_{\rm IQP}(t) =e^{-itH_{\rm IQP}(t)}
\end{equation} 
is the unitary part of the evolution. Note that for the \gls{iqp} circuit time is a parameter that determines the circuit and not a usual time variable of a time independent Hamiltonian. The purely dissipative part is generated by 
\begin{align}
\mathcal{D}_{\rm IQP}(t)[\rho]:=\sum_{k=1}^L \left(\ell_k(t)\rho\ell^\dagger_k(t)-\frac{1}{2}\left\{\ell^\dagger_k(t)\ell_k(t),\rho\right\}\right)=\sum_k \mathcal{D}_k(t)[\rho],
\label{eqn:DIQP}
\end{align}
with the jump operators $\ell_k(t)=\frac{g_{k}}{2}\sqrt{\frac{1}{2N}}\sum_{j=1}^{N}e^{i\omega_{k}\frac{jt}{N}}Z_{j,k}$.

We can approximate the density matrix $\rho_k(t):=e^{t\mathcal{D}_{k}(t)}[\rho]$ with arbitrary small error by defining the map 
\begin{equation}
    G_{k,\epsilon}[\rho]:={\rm Tr_{a}}(e^{\frac{i}{2}\epsilon((\ell_k+\ell^{*}_k)X-i(\ell_k-\ell_k^{*})Y)}\rho|0\rangle\langle0|_{a}e^{-\frac{i}{2}\epsilon((\ell_k+\ell_k^{*})X-i(\ell_k-\ell_k^{*})Y)})
    \label{eqn:Gkepsilonmap}
\end{equation}
with an ancillary qubit $a$ initialised in the $\ket{0}$ state. Here the Pauli matrices $X,Y$ act on ancillary space of the qubit $a$. This map
approximates $\mathcal{D}_{k}$ with an error of order $\epsilon^4$ as by taking the trace over the ancillary qubit we find
\begin{align}
G_{k,\epsilon}[\rho]&=\cos(\epsilon|\ell_k|)\rho\cos(\epsilon|\ell_k|)+\sin(\epsilon|\ell_k|)e^{i\ell_k/|\ell_k|}  \rho\sin(\epsilon|\ell_k|)e^{-i\ell_k^*/|\ell_k|},\\
&=\rho+\epsilon^{2}\left(\ell_k(t)\rho\ell_k^{\dagger}(t)-\frac{1}{2}\{\ell_k^{\dagger}(t)\ell_k(t),\rho\}\right)+O(\epsilon^{4}) 
=e^{\epsilon^{2}\mathcal{D}_k}(\rho)+O(\epsilon^{4}),
\label{eqn:gmap_error}
\end{align}
and the Trotterised evolution of $G$ into $N$ slices (with conventional error scaling as $\epsilon^2 = \frac{t}{N}$) gives
\begin{equation}
    e^{t\mathcal{D}_k(t)}[\rho]=\left(e^{\frac{t}{N}\mathcal{D}_k(t)}\right)^{N}[\rho]=\left(G_{k,\sqrt{\frac{t}{N}}}[\rho]\right)^N+O\left(\frac{t^2}{N}\right).
\end{equation}

Given the commuting property of $\mathcal{D}_{\rm IQP}$ it is possible to use either multiple ancillas to perform the $N$ applications of $G_{k,\sqrt{\frac{t}{N}}}$ and trace, or use one ancilla with reset, without the need for feedforward. This has to be repeated $L$ times as $e^{t\mathcal{D}_{\rm IQP}(t)}=\prod_{k=1}^Le^{t\mathcal{D}_{k}(t)}$. 
In practice if the product of exponentials of Paulis are the implementable gates available, implementing the map $G_{k,\epsilon}$ in \cref{eqn:Gkepsilonmap} requires a further Trotterisation. Using a $p$-order product formula $S_p(t)$ that approximates $G_{k,\sqrt{\frac{t}{N}}}$ we have
\begin{align}\label{eqn:ancilla_lindblad_map}
    G_{k,\sqrt{\frac{t}{N}}}=(G_{k,\frac{1}{N_{\rho}}\sqrt{\frac{t}{N}}})^{N_{\rho}}=S_{p}\left(\frac{1}{N_{\rho}}\sqrt{\frac{t}{N}}\right)+O\left(\frac{1}{N_{\rho}^{p}}\left(\frac{t}{N}\right)^{\frac{p+1}{2}}\right).
\end{align}
To retain the same overall error of the Trotterisation in \cref{eq:rho_F_lem}, we can choose for example $p=1$, $N_\rho=O(N)$, $p=2$ and $N_\rho=O(N^{1/4})$ or $p=4$ and $N_\rho=O(1)$. 
\subsection{Resource estimates for electron-phonon coupled unital Lindbladians}
This section discusses the quantum resource estimates to simulate open quantum systems pertaining to electron-phonon interactions as described in \cref{eq:Lind_eph}. The subsystems consist of arbitrary fermions and harmonic bosons, and they interact locally on each site via the bosonic displacement operator, $x$ and fermionic number operator, $n$. \cref{table:resource_estimate} shows the resource estimates of the Trotterised Lindbladian dynamics governed by \cref{eq:Lind_eph} in a noiseless simulation. This cost is for obtaining the unbiased expectation of some observable $P^A_{\text{obs}}$ in subsystem $\mathcal{S}_A$ with error $\epsilon$ up to some simulation time $T$ and $R$ Trotter steps. We neglect the cost of initial state preparation and the final rotation onto the eigen-basis of $P^A_{\text{obs}}$. For the cost of implementing boson operators on a qubit processor, the boson Hilbert space has to be truncated to $N_b$ levels per site and we quote the asymptotic scaling for encoding the tridiagonal matrix operator $x$ in the Binary/Gray encoding of $O(N_b^2 log_2 N_b)$~\cite{sawaya2020resource} and will discuss this choice in more detail later in the section. It is worth noting that by using the stochastic averaging (\cref{thm:1}) method (or any method that requires encoding the boson Hilbert space), we inherently assume that $N_b$ is chosen such that consequent truncation errors up to the maximum simulation time do not exceed $\epsilon$. From here on we refer to the stochastic averaging method as SA. 

Note that $Q(H_F)$ may also have a dependence on the system size, determined by for example the connectivity of the fermionic Hamiltonian. For the worst case $2q$ gate qubits cost of Dcube in \cref{table:resource_estimate}, we employed a $p=4$ order product formula with a single ancillas for realising the approximation of \cref{eqn:ancilla_lindblad_map}. A smaller choice of $p$ will result in the same asymptotic scaling due to the $R^2$ cost of implementing $H_{IQP}$ as defined in \cref{eq:iqp}. Recall that the value of $p$ dictates how many applications of the map $G$ (\cref{eqn:Gkepsilonmap}) is required. Since each map application can be either done by a single ancilla via measure and reset or multiple ancillas, the optimal choice of $p$ and ancilla usage will be hardware dependent. 

\begin{table}[H]
\begin{center}
\begin{tabular}{|c | c | c | } 
 \hline
 & Stochastic Averaging (SA) (\cref{thm:1})  & Dcube (\cref{theo:Dcube})\\
 \hline
$\#$ Circuits & $O(\epsilon^{-2})$ &  $O(\epsilon^{-2})$   \\
\hline
 $\#$ WC 2q gates & $R \cdot Q(H_F) + O\left( R\cdot (N_L N_b^2 \text{log}_2 N_{b})\right)$ &  Max($O(R^2)$, $O(R\cdot Q(H_F))$)\\
\hline
$\#$ WC Qubits & $N_L \cdot \left( \lceil{\text{log}_2 N_b \rceil} + 2\right)$ & Max($R+1,2N_L$)  \\
\hline 
$\#$  WC total $N_{\text{samples}}$ & $O(\epsilon^{-2})$ & O$(\epsilon^{-2})$\\
\hline
\end{tabular}
\caption{Resource costings for simulating the Trotterised electron-phonon Lindblad evolution specified in \cref{eq:Lind_eph}, comparing stochastic averaging (\cref{thm:1}) vs Dcube (\cref{theo:Dcube}). We assume a fixed total simulation error $\epsilon$, the number of sites $N_s$, boson truncation level $N_{b}$ per site and number of Trotter steps $R$. The row names including `WC' denote the circuits in the algorithm with the \textit{worst case} scaling in qubts, arbitrary 2q-gates and total number of samples. We denote $Q(\cdot)$ as the number of 2-q gates required for the qubit encoding of the desired operator. Note that in the 2q gate count, the stochastic averaging method requires the encoding of the bosonic displacement operators $\hat{x}$, which we assume quote to scale as $O(N_b^2 log_2 N_b)$~\cite{sawaya2020resource}.}
\label{table:resource_estimate}
\end{center}
\end{table}


There are more recent literature studies on reducing the cost of bosonic operator to qubit encodings. Particularly in \cite{liu2026hybrid}, specific numbers for two qubit gate counts are provided for the bosonic displacement operator $\hat{x}$ as a function of boson truncation level $N_b$. Their method allows for (in theory) a scaling of $O(\text{log}_2 N_b)^2$ but introduces additional ancilla qubits and includes extra Trotter error. Additionally, their cost also scales according to $m$, the number of iterative Newton steps required for implementing the square root operation on a quantum computer. Furthermore, in the fixed range they considered of $N_b=1-1000$, this method does not beat the best brute force $n$-qubit decomposition~\cite{krol2024beyond}. However, it offers an algorithmic method of determining the gate decomposition for $\hat{x}$ when the brute force decomposition becomes classically intractable. 

Thus we will use the brute force CNOT cost as concrete numbers to compare against. For simplicity we ignore the fermionic cost in two qubit gate count and assume it shifts all gate counts by a constant factor. We also fix a desired error of $\epsilon \sim 10^{-2}$ and a total time evolution to $T=\sqrt{10}$ such that the number of first order Trotter steps needed is roughly $\sim 1000$.  The qubit and CNOT count for simulating  \cref{eq:Lind_eph} for various system sizes $N_s$ are show in \cref{fig:resource_cost_scaling}. The CNOT and qubit costings for the $SA$ method only include that of implementing the bosonic $\hat{x}$ operator. Using the brute force decomposition method~\cite{krol2024beyond}, the total CNOT cost required for $R$ Trotter steps is $\lceil R \times N_s \times (\frac{22}{48} N_b^2 - \frac{3}{2}N_b  + \frac{5}{3})\rceil $. On the other hand, the qubit count is $N_s \times (\lceil \text{log}_2 N_b \rceil +2)$, where the 2 results from assuming spinful fermions.

\begin{figure}[H]
\centering
\includegraphics[width=\textwidth]{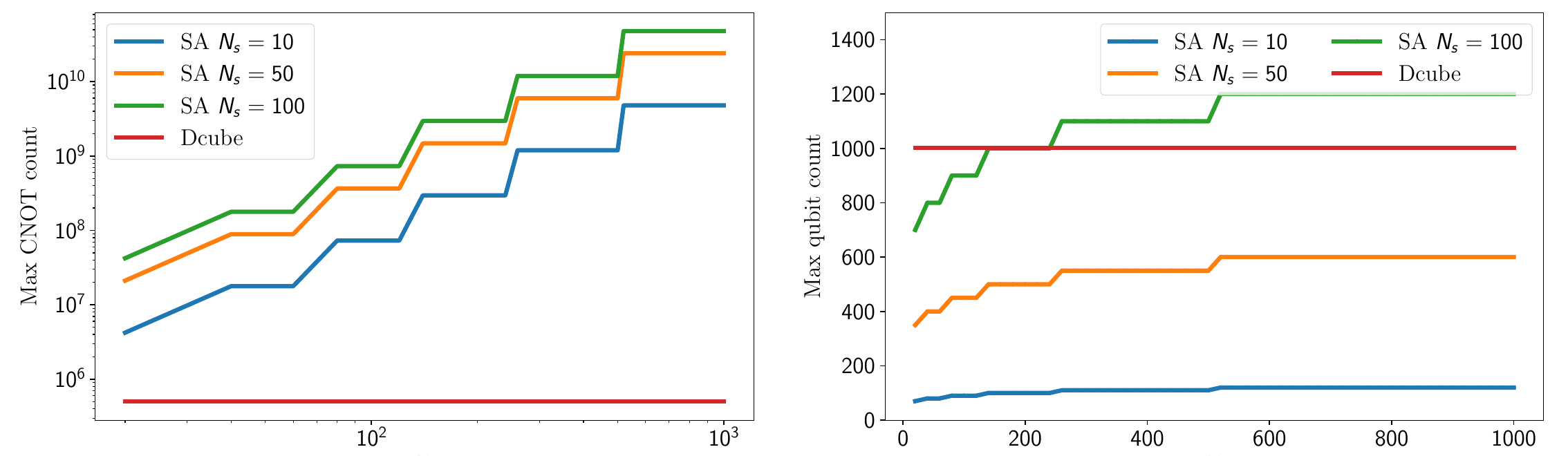}
\caption{(Left) CNOT gate and (Right) qubit count for simulating the Lindbladian dynamics, \cref{eq:Lind_eph}. Here we compare between Dcube (red lines) against SA, which requires boson operator to qubit encoding. For $R=1000$ Trotter steps, and various system sizes $N_s$, we plot the resources required in SA for encoding the bosonic displacement operator $\hat{x}$ in the energy basis explicitly. This operator exists in the Hamiltonian across all sites, and gets more costly with increasing Hilbert space truncation $N_b$. This cost is compared against the resources for the \textbf{full} dissipative Lindbladian simulation with Dcube. The brute force method of arbitrary gate decomposition discussed in ~\cite{krol2024beyond} is used for the two qubit costing of the $\hat{x}$ operator.}
\label{fig:resource_cost_scaling}
\end{figure}

\Cref{fig:resource_cost_scaling} demonstrates the simple message that boson encoding is expensive. The Dcube algorithm completely decouples the fermion and boson subsystems without requiring coherent measurement and feedforward, and the reported `max' cost is simply the more expensive one of the two subsystems. Furthermore, we emphasise that the costs for SA (or any other method that requires boson to qubit encoding) is just that of the $\hat{x}$-operator across all the $N_s$ sites (each with boson Hilbert space truncation $N_b$), not the full dissipative Lindbladian simulation. On the other hand, the quantum resources reported for Dcube (red lines) is enough for the entire Lindbladian \gls{tds}. \Cref{fig:resource_cost_scaling} also clearly displays the system size and $N_b$ independence of Dcube.


For the general Lindbladian simulation, more sophisticated methods that beat the $O(t^2/\epsilon)$ scaling exists~\cite{cleve2016efficient, li2022simulating,peng2025quantum}, but often require e.g. block-encodings, linear combination of unitaries and amplitude amplification, which Dcube does not neccessitate. 
For the \textit{unitary} evolution part of the \gls{tds}, we have quoted errors scaling in terms of the deterministic Trotter product formulas. Alternatively,  methods such as Quantum Signal Processing~\cite{low2017optimal, low2019hamiltonian} asymptotically achieve optimal linear scaling in time and polylogarithmic in $1/\epsilon$, but we focus here on those that are more compatible for near term hardware implementation. More sophisticated Trotter only methods also exist that can achieve better than $O(t^2)$ scaling. For example, Ref. \cite{rendon2023all} proposes an algorithm for amplitude simulation using only Trotter and classical postprocessing with resources scaling linearly in time. This can be useful for example in reducing the cost of simulating the observable in $\mathcal{S}_A$ when \gls{iqp} samples corresponding to the effects of $\mathcal{S}_B$ have been generated. Another Trotter only based method to note is that of randomisation~\cite{childs2019faster}, whereby randomising the Trotter layer ordering allows a first order implementation to scale as second order. This reordering approach applied on higher order formulae also result in a reduction of simulation error, but not in an increase in effective order. However, due to the overall scaling of $O(t^2)$ introduced by approximating the Lindbladian evolution with \cref{thm:1}, techniques such as randomisation will not help the asymptotic scaling. However, they can be used to improve the unitary components such as that of the fermionic Hamiltonian \gls{tds} (i.e. $Q(H_F)$)  and \cref{eqn:Gkepsilonmap}.  We believe the analysis presented herein indicates that Dcube is better suited to current hardware constraints compared to methods that require boson operator to qubit encoding. 

\subsection{Nontrivial steady states of $\mathcal{L}_{\rm IQP}$}

The ancillary Lindbladian \cref{eq:iqp} has one trivial fix point corresponding to the fully mixed density matrix $ \mathbb{I}/2^{LN}$ as $\mathcal{L}_{\rm IQP}(t)[\mathbb{I}]=0$. This motivates the question: Does the dynamics in the ancillary system tend to a trivial infinite temperature state? As the Linbladian is diagonal in the computational basis, we can read off directly its eigenvalues. The real part of the eigenvalues controls the decay, so we concentrate on the eigenvalues of $\mathcal{D}_{\rm IQP}(t)$. The eigenmatrices of the dissipator $\mathcal{D}_{\rm IQP}(t)$ are $\sigma_{\bm s,\bm    s'}:=|\bm{s}\rangle\langle \bm{s}'|$ where $\bm{s}$ is a bitstring in the computational basis of the ancillas. The eigenvalues $\mathcal{D}_{\rm IQP}(t)\sigma_{\bm s,\bm s'}=\lambda_{\bm s,\bm s'}\sigma_{\bm s,\bm s'}$ are trivially
\begin{align}
    \lambda_{\bm s,\bm s'}&=\sum_k\left[\ell_{k,\bm{s}}(t)\ell^*_{k,\bm{s'}}(t) -\frac{1}{2}\left(|\ell_{k,\bm{s}}(t)|^2+|\ell_{k,\bm{s}'}(t)|^2\right)\right]=-\frac{1}{2}\sum_k\left|\ell_{k,\bm{s}}(t)-\ell_{k,\bm{s}'}(t)\right|^{2}+i\sum_k\Im\left(\ell_{k,\bm{s}}(t)\ell_{k,\bm{s}'}^{*}(t)\right),
\end{align}
with $\ell_{k,\bm s}(t)=\frac{g_{k}}{2}\sqrt{\frac{1}{2N}}\sum_{j=1}^{N}e^{i\omega_{k}\frac{jt}{N}}{\bm s}_{j,k}$. The real part of the eigenvalue controls the decay of the corresponding eigenmatrix. Note that for the diagonal eigenmatrices $\sigma_{\bm s,\bm s}$ the real part vanishes. In order to understand the approach to a steady state, we need to see how the Lindbladian gap behaves as the number of Trotter steps increases to keep a fixed error for larger times. We can upper bound the real part of the gap by looking at the eigenmatrices $\sigma_{\bm s, \bm {s}+ \bm{\eta}}$ where the sum in the index is understood modulo 2. For these eigenmatrices, the real part of the eigenvalue is
\begin{align}
    |\Re(\lambda_{\bm s, \bm s + \bm \eta})|=\frac{1}{2}\sum_k\frac{g_k}{8N}\left|\sum_{j=1}^Ne^{i\omega_{k}\frac{jt}{N}}{\bm \eta}_{j,k}\right|^2\leq \frac{g_{\rm max}L}{16N}|\bm \eta|^2
\end{align}
where  $g_{\rm max}=\max_k g_k$, and $|\bm \eta|$ is the Hamming weight of the bitstring $\bm \eta$. This implies that for any off diagonal element $\sigma_{\bm s, \bm s + \bm \eta}$ with $|\bm \eta|=o(\sqrt{N})$ the real part of $\lambda_{\bm s, \bm s +\bm \eta}$ vanishes in the limit $N\rightarrow\infty$. Then the long time limit of the evolution is nontrivial as the states (in the computational basis) that survive the decay live on a band of size $o(\sqrt{N})$ around the diagonal.


\section{Numerical results}
\label{sec:numerics}

\subsection{Free fermion dimer coupled to bosons} \label{sec:dimer}
So far we have discussed theoretically a framework for approximating dephasing Lindbladian evolution. In \cref{sec:elec_phonon} we provide explicit error bounds and a precise protocol for the specified type of electron-phonon interaction. In this section, we numerically demonstrate the results presented in \cref{lem:boson_qubit}, which studies how to simulate the electron-phonon Lindbladian described by \cref{eq:Lind_eph}. Here we perform detailed analyses on a spinless fermionic system of two sites shown in \cref{fig:schematic_dimer_dephasing_lindbladian_model_system}, also known as the dimer. Based on this we will refer to this system as the dimer in the following discussion. 

\begin{figure}[H]
\centering
\includegraphics[width=0.6\textwidth]{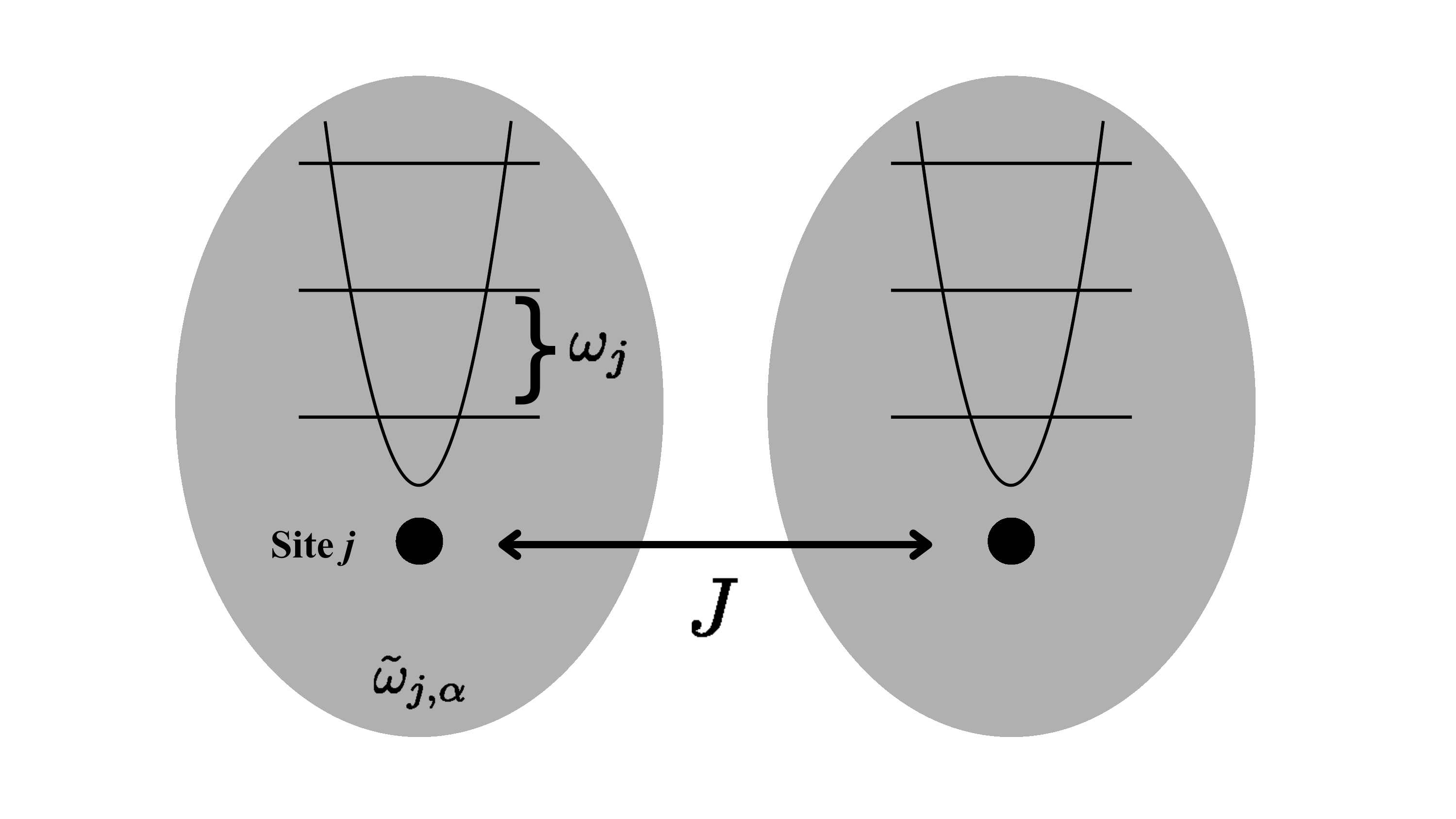}
\caption{Schematic of the dimer in the presence of a bath and quantum Brownian particles on each site. Physical sites are denoted by the black circles with site index $j$. The harmonic oscillator with energy level spacing $\omega_j$ denote the Brownian particle on each site, relating to the Hamiltonian $H_Q$. The gray oval encompassing the entire site represents the boson bath with a continuum of energy levels $\tilde{\omega}_{j,\alpha}$ (see also \cref{appendix:epi_lindbladian_derivation}). On each site, the bath, Brownian particle and fermions are all coupled together with interaction strength $g_j$. Fermions can hop between the sites with hopping strength $J$ and are described by the Hamiltonian $H_F$.}
\label{fig:schematic_dimer_dephasing_lindbladian_model_system}
\end{figure}

More specifically, the fermion and boson subsystem Hamiltonians from \cref{eq:Lind_eph} under consideration are 
\begin{align}\label{eqn:HF_HQ}
    \centering
    H_F &= \sum_{\langle i,j \rangle, \sigma } -J(c^{\dagger}_{i,\sigma} c_{j,\sigma} + c^{\dagger}_{j,\sigma} c_{i,\sigma}) + U n_{i,\uparrow} n_{i,\downarrow}, \nonumber \\
    H_Q &= \sum_j \omega_j q^{\dagger}_j q_j
\end{align}
where $J$ is the fermion hopping amplitude across nearest neighbours, $n_{i,\sigma}$ the fermion number operator for site $i$ and spin $\sigma=\uparrow$ (WLOG we restrict to the spin up sector), and  $\omega_j$ the harmonic oscillator frequency for the Brownian particle on site $j$. For this section, we set $J=1$, $U=0$, $\omega_j = 1$, $g_j = g, \forall j$ and consider the system with one fermion. Wherever it is relevant, we truncate the boson Hilbert space to contain at most eight energy levels ($N_b=8$). In general for a system on $L$ sites with $N_b$ boson levels per site, the number of qubits needed for the simulation scales as $L(\log_2(N_b)+1)$. This classical simulation scales very unfavorably since the size of the matrices involved at worst grows as $2^{2L(1+\text{log}_2(N_b))}=2^{16}$ for this spinless case, where the additional factor of two in the exponential comes from the full density matrix simulation. 

We choose to initialise the state of the Brownian particles in the vacuum and populate a single fermion on site one. The fermion density dynamics for this site, $\text{Tr}(n_1 e^{t\mathcal{L}} \rho(0))$ (where $\mathcal{L}$ is given by \cref{eqn:L_epi}) is simulated with various methods and shown in \cref{fig:TDS_1x2_gpaper_4.0_10000_samples_and_density_errors_ALLMETHODS}. The solid lines named `exact evolution' are the density matrix time evolutions according to \cref{eq:Lind_eph}. We will refer to this method as `exact Lindbladian'. 
It is important to note that the `exact Lindbladian' evolution also suffers from error, primarily from the truncation of the infinite dimensional Hilbert space associated with the Brownian particles. We expand upon this problem in \cref{sec:boson_scaling_problem}. A na$\ddot{\i}$ve strategy to tackle this is to study convergence of observables by incrementally changing the boson Hilbert space cutoff. However this can be computationally demanding, highlighting another issue with general simulations involving bosons. Therefore, when comparing Dcube against `exact Lindbladian' for general simulations involving bosons, it is imperative to account for the effect of boson truncation on the fermion observables, because Dcube implicitly takes into the account the infinite bosonic Hilbert space. 

\Cref{thm:1} allows us to directly simulate the `exact Lindbladian' evolution, through the expectation over stochastic channels. This is denoted as the `stochastic' method and cuts down the statevector simulation cost down to $2^{L(1+\text{log}_2(N_b))}$. The `circuit' curves correspond to the Dcube algorithm, with $N$ Trotter steps to approximate the true Lindbladian evolution (without truncating the boson Hilbert space) by sampling from the probability distribution of the ancillary qubits, $\Gamma_{\bm{\gamma}}(t)$ given by \cref{eqn:Gamma_explicit}. The fermion density is then computed via a Monte Carlo estimate of the full summation 
\begin{equation}
    \centering
    \text{Tr}(n_i \rho_F(t))_{\text{circuit}} = \frac{1}{N_s} \sum_x \text{tr}\left( n_i U_{\bm{\gamma}_x, t} (\rho_F) U^{\dagger}_{\bm{\gamma}_x,t} \right) 
\label{eqn:montecarlo_exactcircuit}
\end{equation}

with $N_s$ samples. The errors associated with the `stochastic' and 'circuit' methods are computed as $\frac{\sigma}{\sqrt{N_{\text{s}}}}$ where $\sigma$ denote the standard deviation od a particular observable computed from $N_{\text{s}}$ samples. Since in our case the fermion subsystem $H_F$ is non-interacting, \cref{eqn:montecarlo_exactcircuit} is computed efficiently classically via \gls{flo}~\cite{terhal2002classical} after the circuit  samples are collected. In the implementation of $\Gamma_{\bm{\gamma}}(t)$ in \cref{eqn:Gamma_explicit}, a second order Trotter decomposition is used to realise the map for the dissipative evolution $e^{t\mathcal{D}_{IQP}(t)}$ in \cref{eqn:DIQP}. In order to retain an overall error of $O\left( \frac{1}{N} \right)$, the number of these internal Trotter steps is set to $N_{\rho} = \lceil {N^{\frac{1}{4}}} \rceil$ and $N_{\eta} = N$ ancillas are used to realise \cref{eqn:DIQP}.

\begin{figure}[H]
\centering
\includegraphics[width=\textwidth]{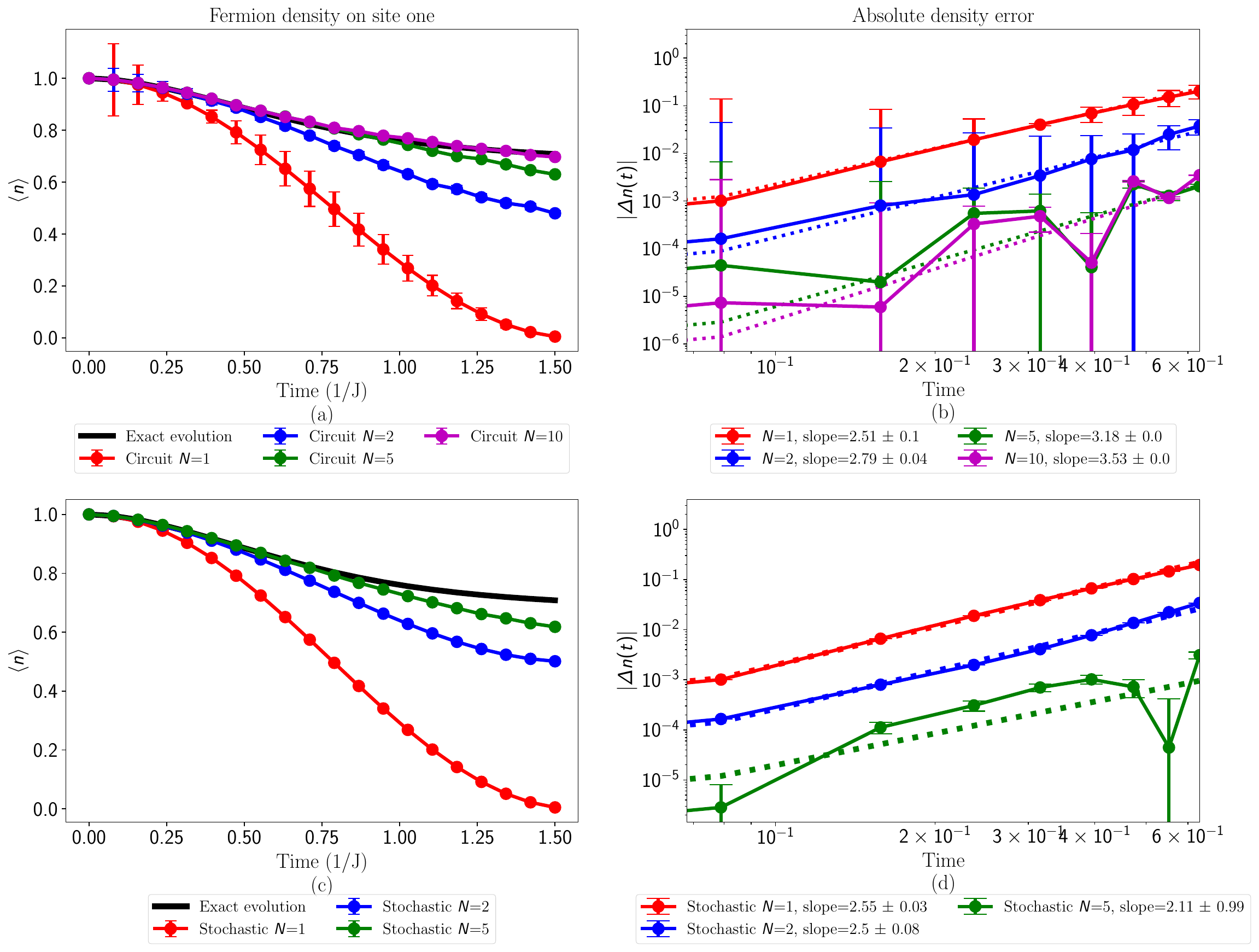}
\caption{Density time dynamics of the spinless free fermion dimer induced by the dephasing Lindbladian of \cref{eq:Lind_eph} compares (a) `exact Lindbladian' against `circuit' method with the fermion density absolute error in (b). The dashed lines are linear fits to the logarithm of the absolute density error with the slopes in the respective legend according to $N$ Trotter steps. The number of ancillas to realise the auxiliary dissipative Lindbladian in \cref{eqn:DIQP} is always set equal to the total number of Trotter steps, $N_{\eta}=N$. Each application of this map uses $N_{\rho}=\lceil {N}^{\frac{1}{4}}\rceil$ internal Trotter steps. A total of $10$k samples are taken per experiment across all different Trotter steps for the `circuit' method. Similarly, the time dynamics of (c) `exact Lindbladian' against `stochastic' method with absolute fermion density error are compared in (d). $5$k samples of uniformly sampled stochastic channels are used across all $N$ Trotter steps. $N_b=8$ is set for `exact Lindbladian' and `stochastic' simulations. The initial state is a Fock state with a single fermion on site one and $g=4$ across all simulations.}
\label{fig:TDS_1x2_gpaper_4.0_10000_samples_and_density_errors_ALLMETHODS}
\end{figure}

\Cref{fig:TDS_1x2_gpaper_4.0_10000_samples_and_density_errors_ALLMETHODS} also exhibits fermion number density error scaling  $|\Delta n^f|$  between the `exact Lindbladian' vs Trotter approximations on the right column. Visibly, the logarithmic error grows linearly in time with a slope larger than the one expected from the overall first order Trotter scaling. For reference, the expected sampling noise, $\epsilon_{\text{sampling}}$, at $N_s=10$k is on the order of $\epsilon_{\text{sampling}} \sim 10^{-2.4}$ which contributes to the standard deviation of the results. Furthermore, the boson truncation in the `exact Lindbladian' simulations contribute to errors on a similar order of magnitude, as examined below in \cref{sec:boson_scaling_problem}. 

\subsubsection{Dependence on Hilbert space truncation}\label{sec:boson_scaling_problem}

Consider again time dynamics using the `exact Lindbladian' simulation method where the boson Hilbert space is truncated. The boson number density evolution is shown on the LHS of \cref{fig:Lindblad_exact_nxbosonscaling_1x2_gpaper_4.0}. Assume there is no knowledge of the $N_b=4$ curves. Notice that the boson number density for $N_b=2$ on each site, shown as dashed lines in \cref{fig:Lindblad_exact_nxbosonscaling_1x2_gpaper_4.0}, never reaches the maximum allowed value of one, but appears to equilibrate well below it. Therefore, one may be inclined to set the cut-off at $N_b=2$. However, increasing the truncation level result in the bosons following completely different dynamics and settling into a new, higher equilibrium. This reflects the fact that truncating the boson subspace \textit{changes} the Hamiltonian for which the dynamics are simulated. This effect also manifests in the fermion dynamics where there is visible deviation between the $N_b=2$ vs $N_b=4$ fermion densities. The RHS of \cref{fig:Lindblad_exact_nxbosonscaling_1x2_gpaper_4.0} shows the absolute difference in fermion density using different truncations of the bosonic Hilbert space. This variation is on the scale of $ 10^{-2}$ when comparing $N_b=4$ and $N_b=8$. 
In light of this, one should be mindful when comparing Dcube against `exact' simulations such as that in \cref{fig:TDS_1x2_gpaper_4.0_10000_samples_and_density_errors_ALLMETHODS}, because the latter always involves a truncation of the bosonic Hilbert space.


\begin{figure}[H]
\centering
\includegraphics[width=\textwidth]{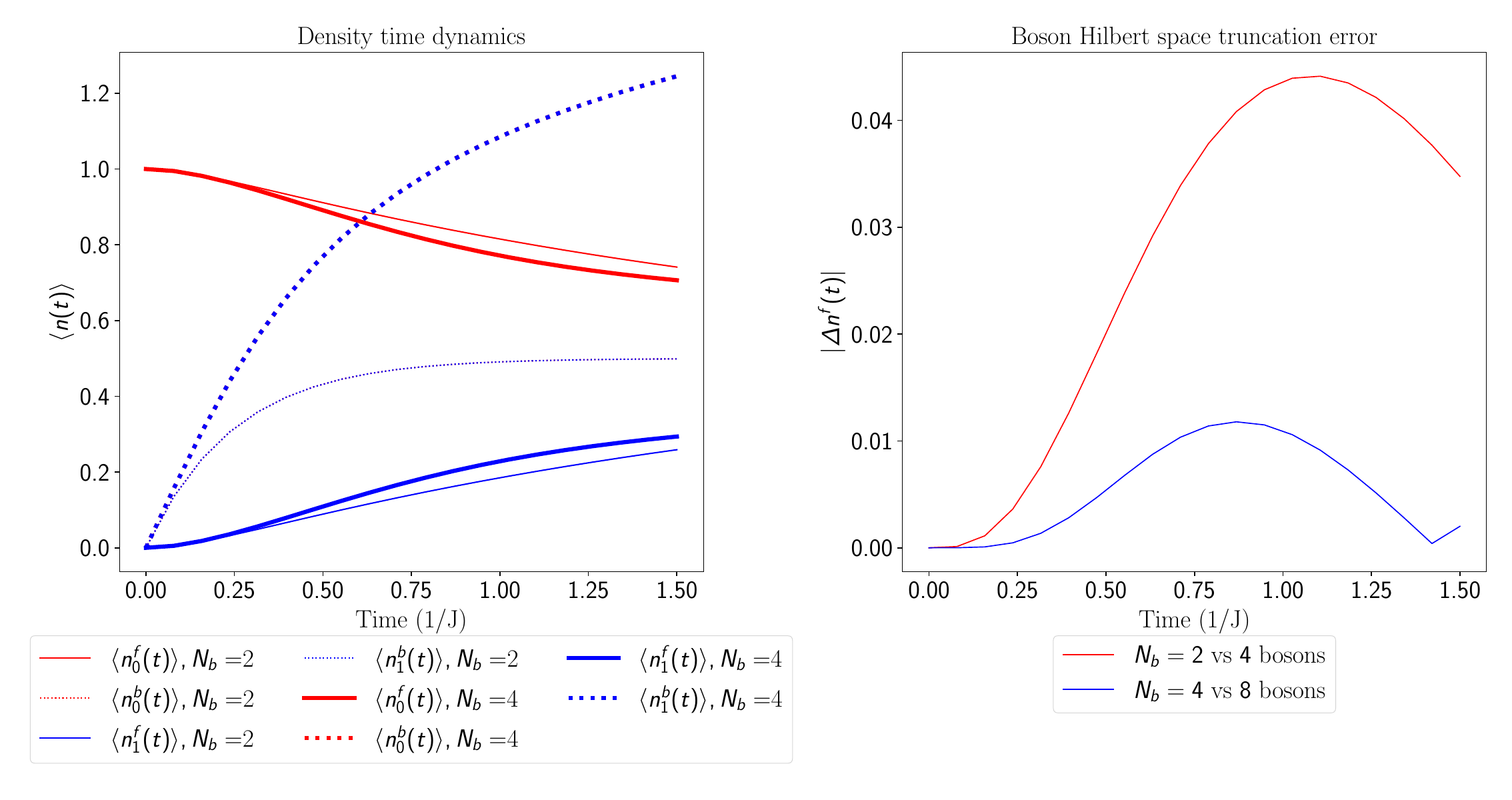}
\caption{(LHS) Time dynamics of particle density of a spinless Fermi-Hubbard dimer under the Lindbladian \cref{eq:Lind_eph}. $\langle n^d_{i} \rangle$ represents the number density of site $i \in [0,1]$ and particle identity $d \in [f,b]$ (f=fermion, b=boson), for two instances of $N_b$. The initial state is a single fermion on site $i=0$. Each plot is associated with the boson truncation level of maximum $N_b-1$ bosons per site. (RHS) The maximum change in fermion density across all sites with increasing boson Hilbert space dimension in the `exact Lindbladian' simulations for the dimer at $g=4$.}
\label{fig:Lindblad_exact_nxbosonscaling_1x2_gpaper_4.0}
\end{figure}

The effect of the truncation is dependent on the interaction strength.  To see this, a particular limit is illuminating. Consider for example the Holstein Hamiltonian~\cite{holstein1959studies} in real space using the same conventions for fermions and bosons as we have throughout the rest of the text,
\begin{equation}
    H = H_F + \frac{1}{2}  \sum_i\left(  \omega_i^2 x_i^2 + p_i^2 \right)+  \sum_i g_i x_i c^{\dagger}_i c_i,
\label{eqn:full_hamiltonian_before_trace_realspace}
\end{equation}
where $p_i$ is the momentum operator of the boson particle. When the fermionic occupation is full, such that $\langle n_{i,f} \rangle$ is constant and fixed for all sites $i$, the purely bosonic Hamiltonian can be written as
\begin{equation}\label{eq:holstein_shifted_boson_energy}
    H - H_F = \frac{1}{2}\sum_i \left(\omega_i x_i + \frac{g_i \langle n_{i,f} \rangle}{\omega_i}\right)^2 + \frac{1}{2}p_i^2
\end{equation}
and is valid up to adding a constant. In this regime, it follows from \cref{eq:holstein_shifted_boson_energy} that the boson position operator shifts to a new equilibrium depending on $g_i$. Intuitively, a higher average displacement means a higher harmonic oscillator energy and hence a larger boson population. Consequently this simple limit shows that truncating the bosonic Hilbert space has to be done appropriately, carefully studying the convergence of the results as a function of truncation. 
For a similar analysis on the behavior of a different observable with changing the boson truncation level, see \cref{appendix:gf_dcube_vs_exact_tds} following the discussion in \cref{sec:gf_numerics_1d}.

\subsubsection{Behaviour of auxiliary dissipative Lindbladian $\Gamma$ distribution} 

In this work we have uncovered a relationship between dissipative Lindbladian dynamics and \gls{iqp} circuits. This suggests that although the evolution is dissipative in nature, it may be hard to capture it classically. This is because to carry out this computation it is necessary to sample from \gls{iqp} circuits. In this section, we show numerically that the probability distribution arising from these circuits for small system sizes quickly become non-trivial, which we take as an indication that sufficiently long time simulations can be difficult. Theoretically, it has been shown that, assuming the polynomial hierarchy does not collapse, efficiently simulating the natural outputs of \gls{iqp} can be classically hard~\cite{bremner2015average,bremner2011classical,aaronson2011computational}. In \cref{fig:1x2_gpaper_4.0_Ntrotter_2_time_1}, we explicitly present the behaviour of the probability distribution \cref{eqn:Gamma_explicit} for the spinless dimer example discussed in \cref{sec:dimer}, for various times $T$ with two first order Trotter step (i.e. a total of four qubits). It shows that an initially peaked distribution at $T=0$ becomes anti-concentrated as time increases.
\begin{figure}[H]
\centering
\includegraphics[width=\textwidth]{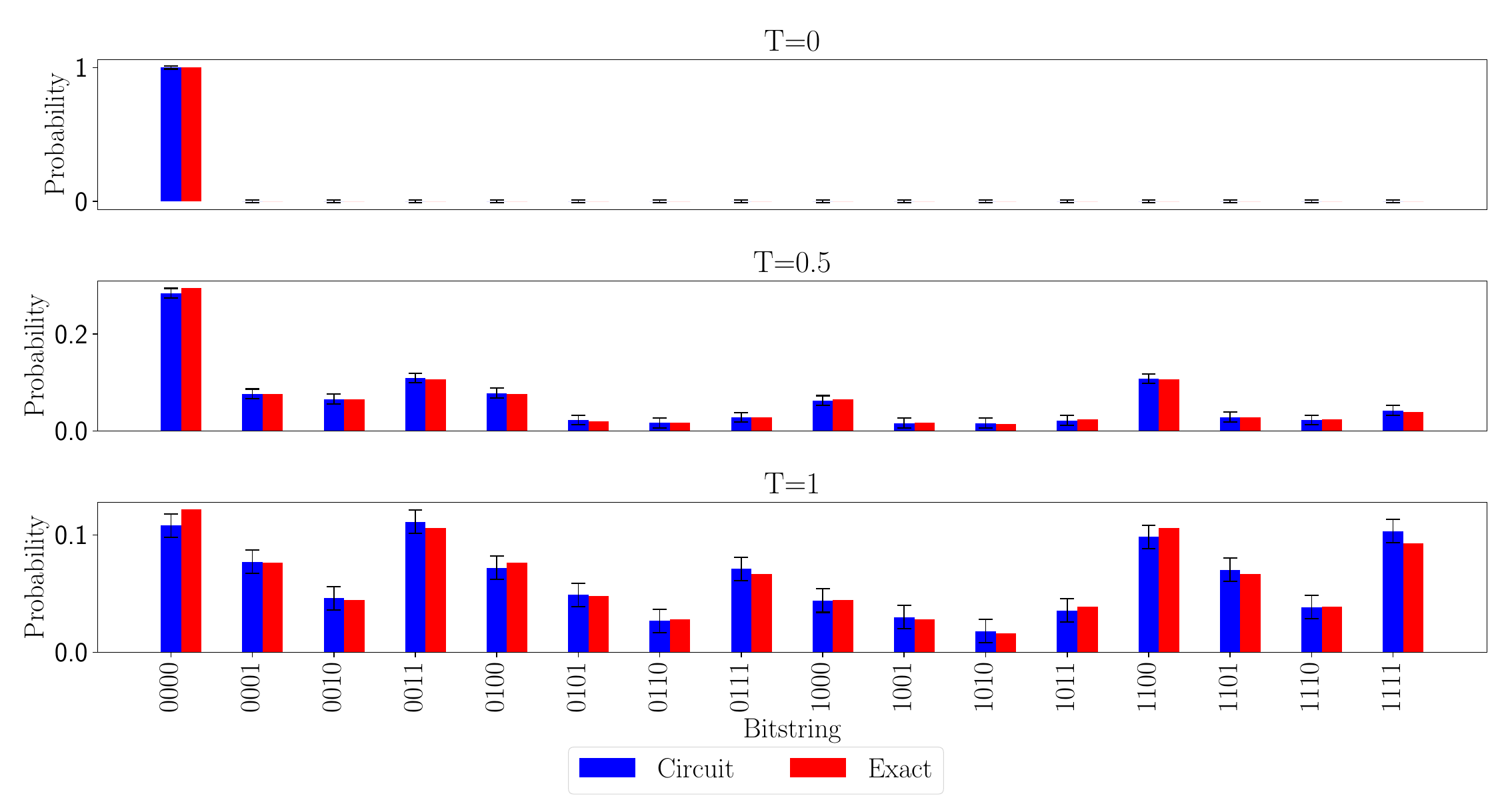}
\caption{The full decomposition of the probability distribution of the dimer $\Gamma(t)$ for different bitstring $\bm{\gamma}$s at various fixed total evolution time. $g=4$, $\omega=J=1$, $N_{\eta}=5$ and $N_{\rho}=2$ and $N=2$ first order Trotter steps. The `exact' method refer to exactly computing $\Gamma(t)$. The `circuit' method refers to sampling from the corresponding noiseless \gls{iqp} circuit that approximates $\Gamma(t)$. $N_s=10$k samples were drawn and the error bars signify the sampling error $\frac{1}{\sqrt{N_s}}$.}

\label{fig:1x2_gpaper_4.0_Ntrotter_2_time_1}
\end{figure}
The `exact' implementation is the exact computation of \cref{eqn:Gamma_explicit}. The `circuit' results are computed by sampling from the noiseless circuit that approximates $\Gamma$ as discussed in \cref{sec:iqp_details}. Note that due to site independence, rather than sampling once from a circuit of $LN$ qubits, one can also stitch together $L$ samples of a circuit of $N$ qubits. At $t=0$, \cref{eqn:Gamma_explicit} is simply the all-zero bitstring and the $\Gamma(0)$ is simply a delta function, melting into a different \gls{iqp} distribution at each time.

\subsection{Green's function of the spinful Fermi-Hubbard dimer under various dephasing}
\label{sec:gf_numerics_1d}
In this section, we show the usefulness of \cref{thm:1} and \cref{theo:Dcube} in cutting down the computational cost compared to exact Lindbladian simulation. We consider the effects of two different dephasing Lindbladians on the interacting Fermi-Hubbard dimer at half-filling with zero spin along the z-direction, i.e. $S_z=0$ (such that there is a total of one spin up and one spin down electron). Consider such a system subject to two dephasing Lindbladians. The first is the electron-phonon Lindbladian described by \cref{eq:Lind_eph}. The second is similar but purely fermionic in nature, given by \cref{eqn:purely_fermion_lindbladian}
\begin{equation}
    \centering
    \mathcal{L}_e(\rho)  := -i[H_F,\rho] + \sum_{j=1}^L g_j^2 \mathcal{L}_{n_j}(\rho).
    \label{eqn:purely_fermion_lindbladian}
\end{equation}
This type of Lindbladians describes the usual dephasing of electrons in materials and has been recently studied in refs~\cite{picano2025heating,alba2025free} in the context of dissipation induced heating dynamics. In this section, we consider the spinful dimer so the spin index in \cref{eqn:HF_HQ} runs over $\sigma \in [\uparrow, \downarrow]$ and $U=0,4,8$ (in units of hopping $J$). All other system variables are kept the same as defined in \cref{sec:dimer}. We would like to compare the actions of \cref{eqn:purely_fermion_lindbladian} and \cref{eq:Lind_eph} on the \gls{ldos} ~\cite{economou2006green}, an intrinsic quantity describing a system's electronic structure as well as information of the excitation spectra. The \gls{ldos} on site $i$ is given by 
\begin{equation}
    \centering
    \tilde{\rho}_{i}(\omega) := -\frac{1}{\pi} \text{Im }\mathcal{G}^{R,\sigma}_{ii}(\omega),
    \label{eqn:LDOS}
\end{equation}
and is attainable via the diagonal components of $\mathcal{G}^{R,\sigma}(\omega)$. This is the Fourier transform~\footnote{Formally, the Fourier transform is defined as 
\begin{equation*}
\tilde{G}^{R, \sigma}(\omega) := \int_{-\infty}^{\infty} G^{R,\sigma}(t) e^{-i\omega t} dt
\end{equation*} 
where the integral extends to infinity. Practically, we compute the dynamical correlation function $G_{ii}^{R,\sigma}(t)$ as a time truncated signal with some fixed sampling rate. Therefore, $\mathcal{G}_{ii}^{R, \sigma}(\omega)$ as defined in the text is an approximation of $\tilde{G}^{R, \sigma}(\omega)$. Thus, if $G_{ii}^{R,\sigma}(t)$ is represented by a time signal $x_1, \dots, x_M$ with $M$ points, then $\mathcal{G}_{ii}^{R,\sigma}(\omega)$ is computed through the \gls{fft}
\begin{equation*}
    \mathcal{G}_{ii}^{R,\sigma}(\omega) := \sum_{p=1}^{p=M} x_p e^{-\frac{i2\pi \omega p}{M}}
\end{equation*}} of the dynamical correlation function in time (or many-body Green's function) $G^{R,\sigma}(t)$ defined as
\begin{equation}
    \centering
    G_{ii}^{R,\sigma}(t) := -i\theta(t) \langle \{ c^{\dagger}_{i,\sigma}(t), c_{i,\sigma} \} \rangle_0 
    \label{eqn:retarded_GF}
\end{equation}
where the overlap $\langle \rangle_0$ is taken over the ground state. $\theta(t)$ is the Heaviside step function to impose causality. There is a global spin symmetry under exchanging $\uparrow$ and $\downarrow$ such that $G^{R, \uparrow}_{ii}(t) = G^{R, \downarrow}_{ii}(t)$. We therefore restrict ourselves to considering just the spin up sector and drop the spin index. Since the dimer is invariant under swapping of sites, the two sites have equivalent \gls{ldos} and we fix $i=1$. For the spinful interacting Fermi-Hubbard dimer without any dephasing, the zero-temperature equilibrium Greens function has an analytical solution in the Lehmann representation~\cite{romaniello2021hubbard}, whose poles can be obtained exactly. These correspond to the energies required for adding/removing an electron and are used as reference points in \cref{fig:dcube_dimer_GF_spinful}. The effect of dissipation captured by the evolution with Lindbladians \cref{eqn:purely_fermion_lindbladian} and \cref{eq:Lind_eph} on the 
\gls{ldos} of the dimer is shown in \Cref{fig:dcube_dimer_GF_spinful}. The \gls{ldos} spectrum is symmetric under $\omega \rightarrow -\omega + c$ for constant $c$ due to particle-hole symmetry. The time truncation window and number of samples affect the \gls{fft} to give the non uniform intensity profile shown in the first row of \cref{fig:dcube_dimer_GF_spinful}. For the second and third row, the dimer evolves subject to the Lindbladians \cref{eqn:purely_fermion_lindbladian} and \cref{eq:Lind_eph} respectively. In the second row, the `stochastic' simulation method corresponds to \cref{thm:1}, i.e. the expectation over stochastic channels. This is much cheaper than full statevector simulation as it reduces the Hilbert space from dimension $D$ to $\sqrt{D}$. The effect of dephasing on the time signal is an exponential dampening factor, and hence manifests in the \gls{fft} as a Lorentzian envelope around the underlying frequencies. This is clearly demonstrated in \cref{fig:dcube_dimer_GF_spinful}, whereby an increase in $g$ increases the Lorentzian width. This smearing effect also uniformises the peak intensities by softening the abrupt time truncation window. The main difference in the two dephasing Lindbladians on the dimer \gls{ldos} under the same coupling strength $g$ is the resulting Lorentzian width. In the third row, the exact Lindbladian simulation is again not included. A quantifiable error bound does not exist for this simulation due to the truncation of the bosonic Hilbert space. On the other hand, Dcube computes $G^{R,\sigma}_{ii}(t)$ in the infinite boson Hilbert space \textit{and} has the promise of being theoretically bounded by Trotter error. We provide a comparative study of `exact Lindbladian' vs Dcube simulation for $G_{ii}^{R,\sigma}(t)$ in \cref{appendix:gf_dcube_vs_exact_tds} at short times, where for sufficiently large number of Trotter steps, the main error is due to the bosonic truncation. This is similar to what has been done for fermion densities in \cref{fig:TDS_1x2_gpaper_4.0_10000_samples_and_density_errors_ALLMETHODS} and \cref{fig:Lindblad_exact_nxbosonscaling_1x2_gpaper_4.0}
. 

\begin{figure}[H]
    \centering
    \includegraphics[width=\linewidth]{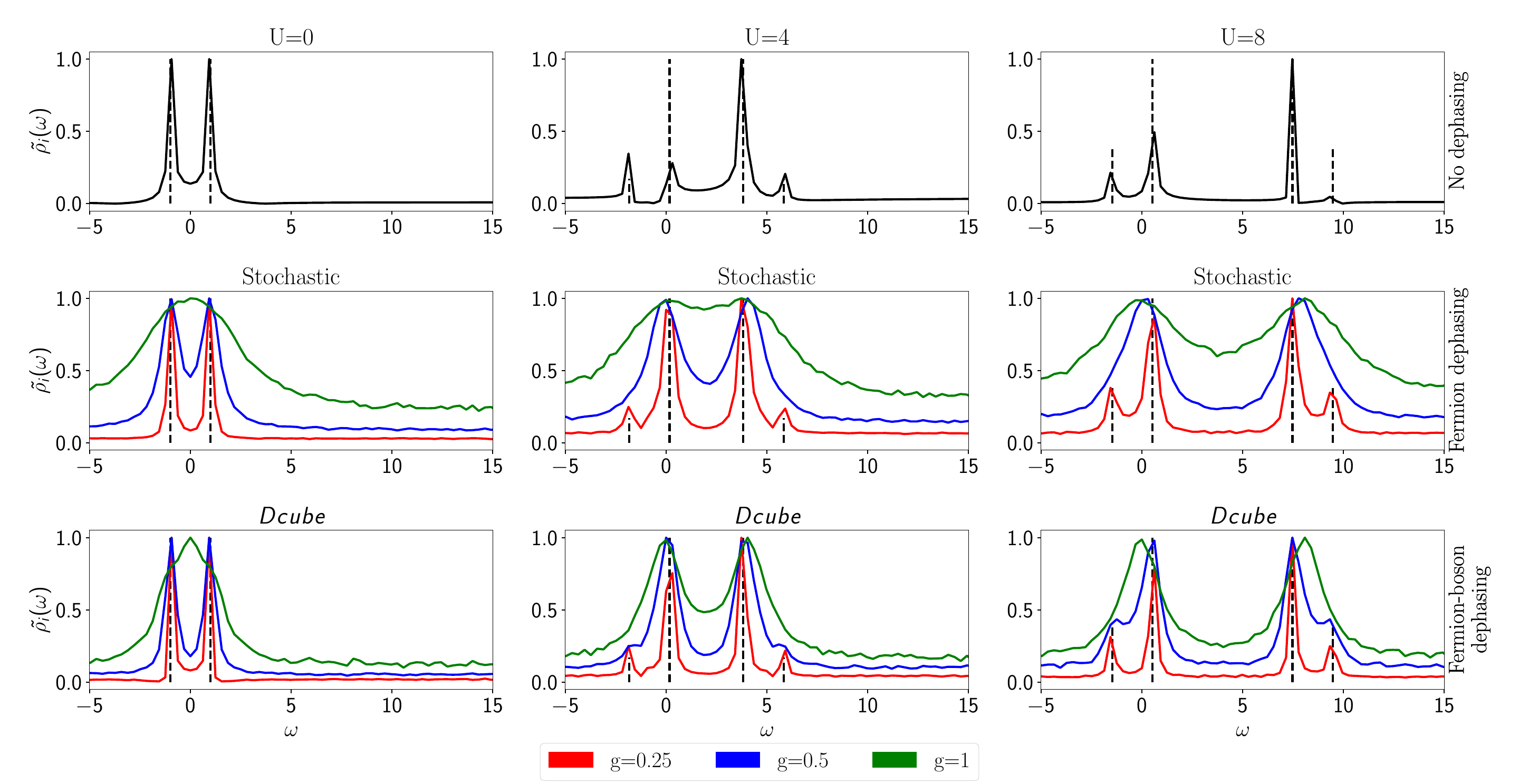}
    \caption{\gls{ldos} of the spinful Fermi-Hubbard dimer under two different types of dephasing at half-filling with $S_z=0$ at site $i=1$. Each column denotes a fixed interaction strength $U$. The vertical dashed lines denote the analytical peaks of the one-body Green's function for the dimer with no dephasing. Along row one, the solid black lines denote the rescaled imaginary component of the \gls{fft} of the truncated time signal $G_{11}(t)$.  Along the second and third row, the dimer is subject to the Lindbladians \cref{eqn:purely_fermion_lindbladian} and \cref{eq:Lind_eph} respectively. Different dephasing strengths $g$ are denoted by different colours. The method for computing the second row named `stochastic' is by producing the expectation value from $30$k stochastic channels through the application of \cref{thm:1} with $N=15$ first order Trotter steps. Dcube is used as the simulation method on the third row, with $N=10$ first order Trotter steps, $10$k IQP samples, $N_{\eta}=10$ and $N_{\rho}=2$ second order ($p=2$) Trotter steps to produce the map in \cref{eqn:Gkepsilonmap}. The total time evolution for all signals were performed up to $T=20$ with $100$ time point samples.}
    \label{fig:dcube_dimer_GF_spinful}
\end{figure}

\section{Conclusion}
\label{sec:conclusion}
The simulation of open quantum systems is fundamental to many aspects of physics, such as material science and condensed matter physics. Accurate simulations of open systems can quickly become intractable upon considering relevant environmental effects \cite{luchnikov2019simulation}. In this work we develope quantum algorithms capable of simulating Lindbladian dynamics (arguably the simplest type of evolution characterising open quantum systems).  We highlight two main contributions of this work, the first
is a framework for simulating general Lindbladians via sampling unitary channels. We provide an explicit algorithm to simulate unital Lindbladians without requiring additional qubits and a generalisation to simulate general Lindbladians using with  and additional ancillary qubits; the second introduces a
decoupling scheme that reduces the complexity of Lindbladians consisting of interacting subsystems. 

This decoupling scheme is particularly helpful when discussing systems coupled to bosonic degrees of freedom. Bosons have an infinite dimensional Hilbert space and then any finite dimensional representation introduces an error. Although in systems where the energy is conserved this error can be upper bounded \cite{woods2015simulating, tong2022provably, arzani2025effective}, estimating this error in practice may be difficult. Using the decoupling algorithm we find that the evolution of bosons coupled to fermions can be approximated by sampling an \gls{iqp} circuit, and using the sampled bitstring to define a fermionic circuit. This realises a unital channel on the fermions with a time dependent probability distribution, thus going beyond the Markovian approximation. 

We perform numerical simulations of a fermion-boson coupled system undergoing dephasing, which in itself consists of both fermionic and bosonic components. This illustrative example focuses on fermion dynamics, and brings attention to multiple issues surrounding the exact Lindbladian simulation involving bosons. One notable problem that we highlight is the issue of truncation of the bosonic Hilbert space. Although the simulation for a fixed truncation can appear to be converged and well behaved, increasing the truncation level can still affect the dynamics, signaling the error introduced by introducing an arbitrary cut-off.

The decoupling framework presented here is an accessible way of studying non-Markovian dynamics on a quantum computer and we envision several applications and generalisations to other practical scenarios. The connection with sampling \gls{iqp} circuits suggests that simulating unital Lindbladians including bosons could be classically difficult in general, even though the infinite temperature state is a fixed point of the dynamics.

\section{Acknowledgments}
\label{sec:acknowledgements}
We would like to thank Ashley Montanaro, Raul Garcia-Patron and Jan Lukas Bosse for helpful discussions relating to \gls{iqp} circuits,  as well as algorithmic details and applications.

%


\appendix

\section{Proof of \cref{thm:1}}\label{sec:proofs}

The aim of this section is to bound the distance between the mixed unitary channel $\mathcal{E}(\rho) = \sum_{{\bm s}=\{+,-\}^N}p_{\bm s}\mathcal{U}_{\bm s}(\rho)$, with $\mathcal{U}_{\bm s}$ defined in \cref{eq:mixed_unitary} and the evolution generated by $\mathcal{L}$. For simplicity we will be setting the coherent part to zero $(H=0)$ here as this is just captured by the usual first order Trotter error)

To this end we define two auxiliary channels
\begin{align}
    \Omega_{k,t}():=\sum_{\bm{s}=\{+,-\}^j}\frac{1}{2^k}\prod_{j=1}^ke^{is_j\sqrt{t}L_j}()e^{-is_j\sqrt{t}L_j}=\prod_j \omega_{j,t}()
\end{align}
and $e^{it\mathcal{L}_j}$, where $\mathcal{L}_j(\rho) = \sum_{i=1}^j L_i(\rho)L_i -\frac{1}{2}\{L_i^2,\rho\}$. Using these channels and the properties of the telescopic sum with $\mathcal{L}_{0}=0$ and $\Omega_{0,t}=1$
we have
\begin{align}
\left\|\Omega_{N,t}-e^{t\mathcal{L}_N}\right\|_\diamond=&\left\Vert \sum_{j=1}^{N}\Omega_{j,t}e^{t(\mathcal{L}_{N}-\mathcal{L}_j)}-\Omega_{j-1,t}e^{t(\mathcal{L}_{N}-\mathcal{L}_{j-1})}
\right\Vert _{\diamond}\\\leq&\sum_{j=1}^{N}\left\Vert \Omega_{j-1,t}(\omega_{j,t} - e^{t(\mathcal{L}_N-\mathcal{L}_{j-1})}e^{-t(\mathcal{L}_N-\mathcal{L}_{j})})e^{t(\mathcal{L}_N-\mathcal{L}_{j})}\right\Vert _{\diamond}
\end{align}
where the first equality is an identity of telescopic sums and the
last inequality is a consequence of the triangle inequality. Using
the submultiplicative
property of the diamond norm and the fact that the diamond norm of
a CPTP map is equal to one, the last inequality becomes
\begin{align}
\left\|\Omega_{N,t}-e^{t\mathcal{L}_N}\right\|_\diamond &\leq \sum_{j=1}^N\left\Vert \omega_{j,t}-e^{t(\mathcal{L}_N-\mathcal{L}_{j-1})}e^{-t(\mathcal{L}_N-\mathcal{L}_{j})}\right\Vert _{\diamond}\\
&\leq \sum_{j=1}^N\left\| \omega_{j,t}-e^{t(\mathcal{L}_j-\mathcal{L}_{j-1})}\right\|_\diamond + \sum_{j=1}^N\left\|e^{t(\mathcal{L}_j-\mathcal{L}_{j-1})}-e^{t(\mathcal{L}_N-\mathcal{L}_{j-1})}e^{-t(\mathcal{L}_N-\mathcal{L}_{j})}\right\|_\diamond
\end{align}
where ${\rm Adj}_A:=[A,]$ and ${\rm Adj}^n_A()=[{\rm Adj}_A^{n-1},()]$ the last inequality follows from the triangle inequality. The last term above is $O(t^2)$ by usual Trotter bounds, so we concentrate on the first one.

Let's define $\ell_j()=\mathcal{L}_j()-\mathcal{L}_{j-1}()$ and
\begin{align}
E_{j,t}:=\omega_{j,t}-e^{t(\mathcal{L}_{j}-\mathcal{L}_{j-1})}=	\sum_{m=0}^{\infty}t^{m}\left(\frac{(-1)^{m}}{(2m)!}{\rm Adj}_{L_{j}}^{2m}()-\frac{{\rm \ell}_j^{m}()}{m!}\right)
\end{align} now using that $ {\rm Adj}_{L_{j}}^{0}()=1$ and 
\begin{align}
    {\rm Adj}_{L_{j}}^{2}()=[L_{j},[L_{j},()]]=[L_{j},L_{j}()-()L_{j}]=\{L_{j}^{2},()\}-2L_{j}()L_{j}=-\ell_{j}
\end{align}
by induction it follows that ${\rm Adj}_{L_j}^{2m}=(-1)^m\ell_j^m$, so we have
\begin{align}E_{j,t}=\sum_{m=2}^{\infty}t^{m}\left(\frac{1}{(2m)!}-\frac{1}{m!}\right){\rm \ell}_{j}^{m}
\end{align}
finally, for $t\in[0,1], t^{2}\geq t^{m}$ so
\begin{align}
    \| e_{j,t}\|_\diamond\leq t^2\sum_{m=2}^\infty\left|\frac{1}{(2m)!}-\frac{1}{m!}\right|\|\ell_j^m\|_\diamond
\end{align}
which proves \cref{thm:1}.

\section{Electron phonon interacting Lindbladian derivation}
\label{appendix:epi_lindbladian_derivation}
Here we discuss the origins of the electron-phonon Lindbladian \cref{eq:Lind_eph} from the main text. We start from a closed system of $L$ sites whose Hamiltonian consists of fermions and bosons given by \cref{eqn:full_hamiltonian_before_trace_bath}. 
For concreteness, we write for the fermion subsystem $H_F$, a quartic Hamiltonian with creation (annihilation) operators denoted by $c^{\dagger}$ ($c$). This choice of $H_F$ can be arbitrary as it eventually will only play a role in the unitary evolution. The boson subsystem is further divided into Brownian and bath particles residing on each site $i$, whose creation (annihilation) operators are denoted by $q^{\dagger} (q)$ and $b^{\dagger} (b)$ respectively. 
\begin{align}
\centering
H_{T} &=  \Big[ \sum_{ ijkl} c^{\dagger}_i c_j + h.c. + V_{ijkl} c^{\dagger}_i c^{\dagger}_j c_c c_k  \Big] \text{\textcolor{blue}{($H_F$)}}+ \Big[ \sum_i {\omega}_i q^{\dagger}_i q_i \Big]  \text{\textcolor{blue}{Brownian particle ($H_Q$)}} \nonumber\\
& + \Big[ \sum_i \frac{g_i}{\sqrt{2\pi}} x_i X_{i,\alpha} \left(c^{\dagger}_i c_i-\frac{1}{2}\right) \Big]  \text{\textcolor{blue}{2nd order int ($H_{FQB}$)}} + \Big[ \sum_{i,\alpha} \tilde{\omega}_{i,\alpha} b^{\dagger}_{i,\alpha} b_{i,\alpha} \Big]  \text{\textcolor{blue}{Bath ($H_B$)}} 
\label{eqn:full_hamiltonian_before_trace_bath}
\end{align}
The fermions and bosons on site $i$ are coupled to each other with strength $\frac{g_i}{\sqrt{2\pi}}$, and $\tilde{\omega}_{i,\alpha}$ (${\omega}_i$) denote the bath (Brownian particle) harmonic oscillator frequencies respectively. $\alpha$ is a frequency index.
 $x_i := \frac{q_i^{\dagger}+q_i}{\sqrt{2}}$ and $X_{i,\alpha}:= \frac{b^{\dagger}_{i,\alpha} + b_{i,\alpha}}{\sqrt{2}}$ denote the position operators for the Brownian and bath particles respectively. From here on we work in the interaction picture of the bath, $H_B$. For simplicity, we renormalise such that $\hslash=1$ and assume that each site has the same coupling strength $\frac{g_i}{\sqrt{2\pi}} =\frac{g}{\sqrt{2\pi}}, \forall i$. The total Hamiltonian in the interaction picture is therefore
\begin{align}
\centering
H_I(t) &= H_F + H_Q + e^{-itH_B} H_{FQB}  e^{itH_B} \\
&= H_F+ H_Q + H_{FQB}(t).
\end{align}
Using the boson commutation relations, the interaction Hamiltonian between different subsystems simplifies to 
\begin{align}
\centering
H_{FQB}(t) &= \sum_{i,\alpha} \frac{g}{\sqrt{2\pi}} x_i \left(c^{\dagger}_ic_i -\frac{1}{2}\right) X_{i,\alpha}(t)
\end{align}
where $X_{i,\alpha}(t) = e^{-itH_B} X_{i,\alpha} e^{itH_B}$. Under the Born approximation \cite{born1926quantenmechanik} (i.e. weak coupling with bath) we approximate the total density matrix as $\rho_T(t) \approx  \rho_{FQ}(t) \otimes \rho_B$. Assuming that the bath is memoryless, we apply the Markov assumption \cite{redfield1957theory,davies1974markovian} and the \gls{bmme} is given by
\begin{equation}
\centering
\frac{d}{dt} \rho_{FQ}(t) = -i \text{tr}_B ([H_{I}(t), \rho_{FQ}(t) \otimes \rho_B]) -\int_0^{\infty} ds \text{ tr}_B \Big( [H_I(t), [H_I(t-s), \rho_{FQ}(t) \otimes \rho_B]] \Big)
\label{eqn:BMME_general}
\end{equation} 

We will refer to the first and second term of \cref{eqn:BMME_general} as the unitary $\mathcal{L}_U$ and dissipative $\mathcal{L}_D$ components. Consider the case where the bath is initially in the vacuum state such that 
\begin{equation}
    \centering
    \text{Tr}_B(\rho_B X_{i,\alpha}) = 0, \forall i, \alpha
    \label{eqn:trace_bath_zero}
\end{equation} 
Using \cref{eqn:trace_bath_zero}, the unitary evolution component of the \gls{bmme} in \cref{eqn:BMME_general} therefore simplifies to 
\begin{equation}
    \centering
    \mathcal{L}_U(\rho_{FQ}(t)) = -i ([H_F + H_Q, \rho_{FQ}(t)])\label{eqn:unitary_part_traced_bath}
\end{equation}
after tracing out the bath, where the $H_{FQB}(t)$ term vanish. For the dissipative component $\mathcal{L}_D$, four of the commutator terms involving a single $H_{FQB}(t)$ are again zero using \cref{eqn:trace_bath_zero}. If we integrate \cref{eqn:BMME_general} with respect to time $t$, at short times the $\mathcal{L}_U$ scales $O(t)$ and $\mathcal{L}_D$ scales as $O(t^2)$. The four terms that are time-independent coming from $\text{tr}_B([H_F + H_Q, [H_F + H_Q, \rho_{FQ}(t) \otimes \rho_B]])$ scale as second order in time, i.e. $O(t^2)$ and are ignored. Thus, there is only one non-trivial term out of the nine to consider, given by
\begin{align}
\centering
 \mathcal{L}_D (\rho_{FQ}(t))&= -\int_0^{\infty} ds \text{ tr}_B \Big([H_{FQB}(t), [H_{FQB}(t-s), \rho_{FQ}(t)  \otimes \rho_B]]
\label{eqn:BMME_4terms}
\end{align}
If we assume that the brownian boson is initialised in the vacuum state, we arrive at the simplified relations
\begin{align}
    \centering
    \int ds \frac{g^2}{2\pi} \text{ tr}_B \Big(\sum_{ij} X_i(t) X_j(t-s) \rho_B  \Big) & =   \int ds \frac{g^2}{2\pi} \text{ tr}_B \Big(\sum_{ij} X_i(t-s) \rho_B X_j(t)  \Big) = \sum_{i,\alpha} \frac{g^2}{2\pi} \Big( \int_0^{\infty} e^{-i\tilde{\omega}_{i,\alpha} s} ds \Big)  \nonumber \\
    \int ds \frac{g^2}{2\pi} \text{ tr}_B \Big(\sum_{ij}  \rho_B X_i(t-s) X_j(t) \Big) &=   \int ds \frac{g^2}{2\pi} \text{ tr}_B \Big(\sum_{ij} X_i(t) \rho_B X_j(t-s)  \Big) = \sum_{i,\alpha}  \frac{g^2}{2\pi} \Big( \int_0^{\infty} e^{i\tilde{\omega}_{i,\alpha} s} ds\Big) 
\label{eqn:trace_bosons_simplify} 
\end{align}
We also extend the integral to $-\infty$ and take the limit of a continuum of bath modes such that $\sum_{\alpha} \rightarrow \int d\tilde{\omega}_{\alpha}$ such that 
\begin{equation}
    \centering
    \int d\tilde{\omega}_i \int_{-\infty}^{\infty} ds e^{\pm i\tilde{\omega}_{i,\alpha}s}  = 2\pi \int d\tilde{\omega}_i  \delta(\tilde{\omega}_i) = 2\pi
\end{equation}

then \cref{eqn:BMME_4terms} simplifies to
\begin{align}
    \centering
    \frac{d}{dt} \rho_{FQ}(t)(t) &= -i[H_F + H_Q, \rho_{FQ}(t)] \nonumber \\
    &+ \sum_i g^2  \Big(x_i (n_i \rho_{FQ}(t) n_i - \frac{1}{2}\{n_i, \rho_{FQ}(t) \} \Big) x_i + \frac{1}{4} (x_i \rho_{FQ}(t) x_i - \frac{1}{2}\{x_i^2,\rho_{FQ}(t) \}) \Big)
\end{align}
which is exactly the form of the Lindbladian  \cref{eq:Lind_eph} studied in the main text. 

\section{Simulating dynamical correlation function time dynamics}
\label{appendix:gf_dcube_vs_exact_tds}

In this section, we plot the \gls{tds} of the dynamical correlation function $\text{Re}\left(\langle \{ c_1^{\dagger} c_1 (t)\} \rangle_0\right)$ using Dcube and `exact Lindbladian' methods. As shown in \cref{fig:dcube_dimer_exact_lindblad_vs_d3_scaling}, truncating the boson Hilbert space in `exact Lindbladian' simulations has immediate consequences at short times. This behavior was different for fermion densities studied in \cref{fig:Lindblad_exact_nxbosonscaling_1x2_gpaper_4.0}, which were more stable at short times. This highlights an important practical point, in that it is necessary to do a boson truncation study for \textit{every} observable of interest. On the other hand, the error of Dcube is independent of the truncation, and converges in this particular case at $N_t=5$ first order Trotter steps.

\begin{figure}[H]
    \centering
    \includegraphics[width=0.7\linewidth]{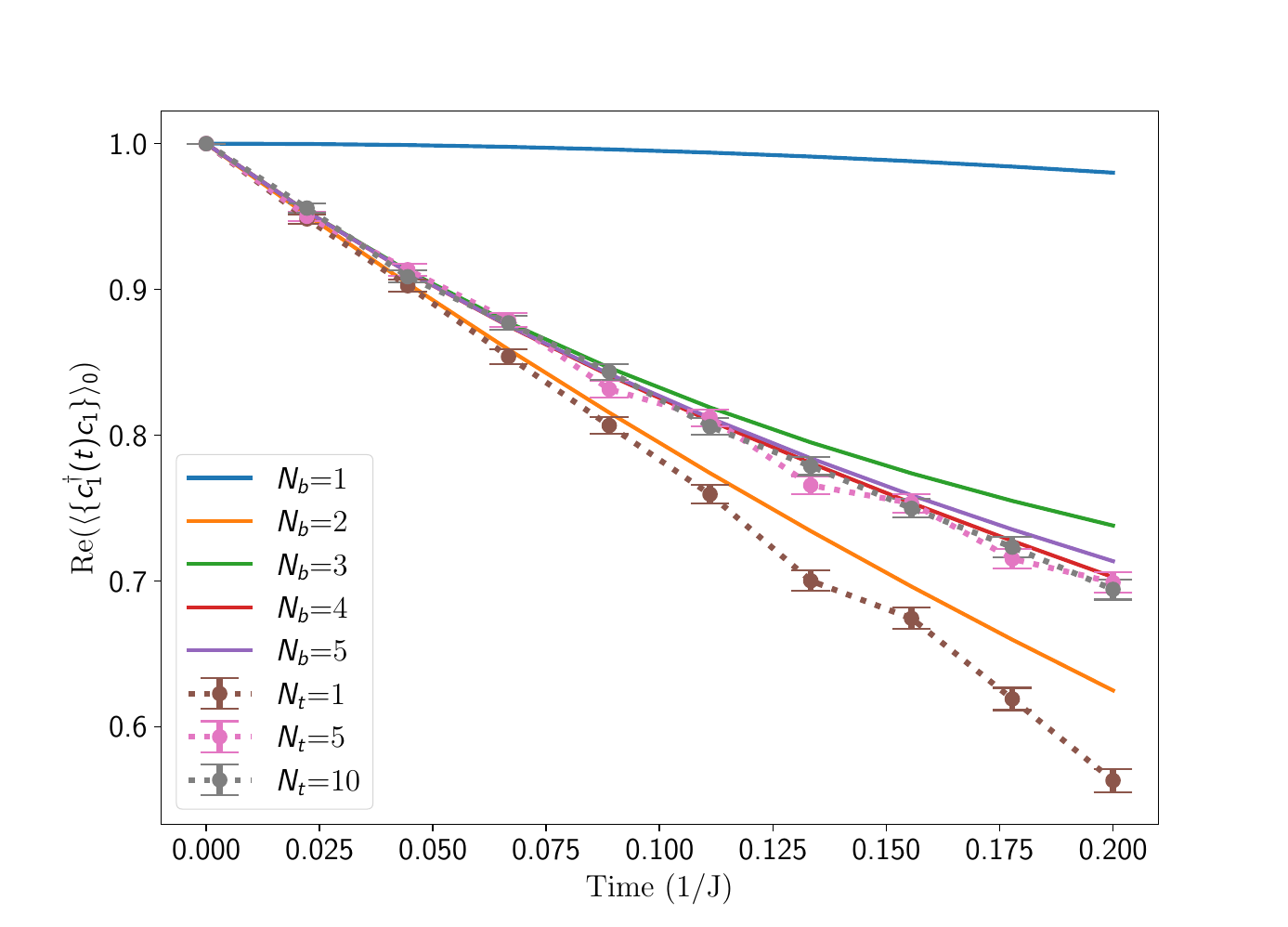}
    \caption{Time dynamics of $\text{Re}\left( \langle \{ c^{\dagger}_1(t) c_1 \}_0 \rangle \right)$ for the spinful dimer at half-filling at site one, $S_z=0$ and $g=1$. The solid lines denote exact Lindbladian statevector simulation with truncated Boson Hilbert space of with $N_b$ levels (i.e. with maximum $N_b-1$ bosons). The dashed lines denote using the Dcube algorithm (\cref{lem:boson_qubit} ) with
    $10$k IQP samples at various first order Trotter step sizes $N_t$. The error bars are taken as $\frac{\sigma}{\sqrt{N_{\text{s}}}}$ where $\sigma$ is the sample standard deviation.} \label{fig:dcube_dimer_exact_lindblad_vs_d3_scaling}
\end{figure}

\end{document}